\newcommand{\bipath}{bipath}
\newcommand{\bipaths}{bipaths}
\renewcommand{\le}{\leqslant}
\renewcommand{\ge}{\geqslant}
\newcommand{\ol}{\overline}
\newcommand{\eps}{\varepsilon}
\newcommand{\emp}{\emptyset}
\newcommand{\Sig}{\Sigma}
\newcommand{\noin}{\noindent}
\newcommand{\bi}{\begin{itemize}}
\newcommand{\ei}{\end{itemize}}
\newcommand{\be}{\begin{enumerate}}
\newcommand{\ee}{\end{enumerate}}
\newcommand{\bd}{\begin{description}}
\newcommand{\ed}{\end{description}}
\newcommand{\bq}{\begin{quote}}
\newcommand{\eq}{\end{quote}}
\newcommand{\tid}{\mbox{{\bf 1}}}
\newcommand{\cA}{{\mathcal A}}
\newcommand{\cB}{{\mathcal B}}
\newcommand{\cC}{{\mathcal C}}
\newcommand{\cD}{{\mathcal D}}
\newcommand{\cN}{{\mathcal N}}
\newcommand{\cT}{{\mathcal T}}
\newcommand{\gR}{{\mathcal R}}
\newcommand{\gJ}{{\mathcal J}}
\newcommand{\one}{{\mathbf 1}}
\newcommand{\Lra}{{\hspace{.1cm}\Leftrightarrow\hspace{.1cm}}}
\newcommand{\qedb}{\hfill$\blacksquare$}
\title{Large Aperiodic Semigroups\thanks{This work was supported by the Natural Sciences and Engineering Research Council of Canada 
grant No.~OGP000087
and by Polish NCN grant DEC-2013/09/N/ST6/01194.}
}
\author{Janusz~Brzozowski\inst{1} \and Marek Szyku{\l}a \inst{2}}
\titlerunning{Large Aperiodic Semigroups}
\authorrunning{J. Brzozowski and M. Szyku{\l}a}   
\institute{David R. Cheriton School of Computer Science, University of Waterloo, \\
Waterloo, ON, Canada N2L 3G1\\
\{{\tt brzozo@uwaterloo.ca}\}
\and
Institute of Computer Science, University of Wroc{\l}aw,\\
Joliot-Curie 15, PL-50-383 Wroc{\l}aw, Poland\\
\{{\tt msz@cs.uni.wroc.pl}\}
}
\begin{document}
\maketitle
\begin{abstract}
The syntactic complexity of a regular language is the size of its syntactic semigroup.
This semigroup  is isomorphic to the transition semigroup of a minimal deterministic finite automaton accepting the language, that is, to the semigroup generated by transformations induced by non-empty words on the set of states of the automaton. 
In this paper we search for the largest syntactic semigroup of a star-free language having $n$ left quotients; equivalently, we look for the largest transition semigroup of an aperiodic finite automaton with $n$ states.

We introduce two new aperiodic transition semigroups. The first  is generated by transformations that change only one state; we call such transformations and resulting semigroups \emph{unitary}. 
In particular, we study \emph{complete} unitary semigroups which have a special structure, and we show that each maximal unitary semigroup is complete. For $n \ge 4$ there exists a complete unitary semigroup that is larger than any aperiodic semigroup known to date.

We then present even larger aperiodic semigroups, generated by transformations that map a non-empty subset of states to a single state; we call such transformations and semigroups \emph{semiconstant}.
In particular, we examine semiconstant \emph{tree} semigroups which have a structure based on full binary trees. 
The semiconstant tree semigroups are at present the best candidates for largest aperiodic semigroups.

We also prove that $2^n-1$ is an upper bound on the state complexity of reversal of star-free languages, and resolve an open problem about a special case of state complexity of concatenation of star-free languages.
\smallskip

\noin
{\bf Keywords:}
aperiodic, monotonic, nearly monotonic, partially monotonic, semiconstant, transition semigroup, star-free language, syntactic complexity, unitary
\end{abstract}

\section{Introduction}
The \emph{state complexity} of a regular language is the number of states in a complete minimal deterministic finite automaton (DFA) accepting the language~\cite{Yu01}. An equivalent notion is that of \emph{quotient complexity,} which is the number of left quotients of the language~\cite{Brz10}; we prefer quotient complexity since it is a language-theoretic notion.
The usual measure of complexity of an operation on regular languages~\cite{Brz10,Yu01} is the quotient complexity of the result of the operation as a function of the quotient complexities of the operands.
This measure has some serious disadvantages, however. For example, as shown in~\cite{BrLiu12}, in the class of star-free languages all common operations have the same quotient complexity as they do in the class of arbitrary regular languages\footnote{Two small exceptions are discussed in Section~\ref{sec:rev}.}. Thus quotient complexity fails to differentiate between the very special class of star-free languages and the class of all regular languages.

It has been suggested that other measures of complexity may also be useful~\cite{Brz12a}, 
in particular, 
the \emph{syntactic complexity} of a regular language which is the cardinality of its syntactic semigroup~\cite{Pin97}. This is the same as the cardinality of the \emph{transition semigroup} of a minimal DFA accepting the language, and it is this latter representation that we use here. 
The transition semigroup is the set of all transformations induced by non-empty words on the set of states of the DFA. 
The \emph{syntactic complexity of a class} of languages is the size of the largest syntactic semigroups of languages in that class as a function of the quotient complexities of the languages.
Since the syntactic complexity of star-free languages is considerably smaller than that of regular languages,  this measure succeeds in distinguishing the two classes.

The class of \emph{star-free} languages is the smallest class  obtained from finite languages using only boolean operations and concatenation, but no star.
By Sch\"utzenberger's theorem~\cite{Sch65} we know that a language is star-free if and only if the transition semigroup of its minimal DFA is \emph{aperiodic}, meaning that it contains no non-trivial subgroups.
Equivalently, a transition semigroup is aperiodic if and only if no word over the alphabet of the DFA can induce  a non-trivial permutation of any subset of two or more states.
Star-free languages and the DFAs that accept them were studied by McNaughton and Papert in 1971~\cite{McNP71}.

Two aperiodic semigroups, monotonic and partially monotonic, were studied by Gomes and Howie~\cite{GoHo92}.
Their results were adapted to finite automata in~\cite{BLL13},  where nearly monotonic semigroups were also introduced; they are larger than the partially monotonic ones and were the largest aperiodic semigroups known to date for $n \le 7$. For $n \ge 8$ the largest aperiodic semigroups known to date were  those generated by DFAs accepting $\mathcal{R}$-trivial languages~\cite{BrLi13}. The syntactic complexity of $\mathcal{R}$-trivial languages is $n!$.
As to aperiodic semigroups, tight upper bounds on their size were known only for $n\le3$.
\smallskip

The following are the main contributions of this paper:
\be
\item
Using the method of~\cite{KiSz13}, we have
enumerated all aperiodic semigroups for $n=4$, and we have shown that the maximal aperiodic semigroup has size 47, while the maximal nearly monotonic semigroup has size 41.
Although this may seem like an insignificant result, it provided us with strong motivation to search for larger semigroups.

The number of aperiodic transformations is $(n+1)^{n-1}$.
For large $n$ the number of aperiodic semigroups is very large, and so it is difficult to check them all.

\item
We studied  semigroups  generated by transformations that change only one state; we call such transformations and semigroups \emph{unitary}.
We characterized unitary semigroups and computed their maximal sizes up to $n=1,000$.
There are $n(n-1)$ unitary transformations. For $n\ge 4$ the maximal unitary semigroups are larger than the maximal nearly monotonic ones and also  larger than  
any previously known aperiodic semigroup.
\item
For each $n$ we found a set of DFAs whose inputs induce \emph{semiconstant tree} transformations --
transformations that send a non-empty subset of the set of all states to a single state, and have a structure based on full binary trees. 
For $n\ge 4$, there is a semiconstant tree semigroup larger than the largest complete unitary semigroup.
We computed the maximal size of these transition semigroups up to $n=500$.
The total number of semiconstant transformations is $(2^{n-1}-1)n$.
\item
We  derived formulas for the sizes  of complete unitary and semiconstant tree semigroups. We also provided recursive formulas characterizing the maximal complete unitary and semiconstant tree semigroups; these formulas lead to efficient algorithms for computing the forms and sizes of such semigroups.
\item
We proved that the quotient complexity of the reverse of a star-free language with quotient complexity $n$ is at most $2^n-1$.
\item
 We resolved an open problem about a special case of quotient complexity of product (catenation, concatenation) of star-free languages $K$ and $L$, when the quotient complexities of $K$ and $L$ are $m\ge 2$ and $2$, respectively: we proved that $3m-2$ is a tight upper bound. 
\ee
Our results about aperiodic semigroups are summarized in 
Tables~\ref{tab:aperiodic_summary} and~\ref{tab:aperiodic_summary2} for small values of $n$. Transformation $\one$ is the identity; it can be added to unitary and semiconstant transformations without affecting aperiodicity.

Additional information about the classes of semigroups in Tables~\ref{tab:aperiodic_summary} and~\ref{tab:aperiodic_summary2} will be given later. 
The classes are listed in the order of increasing size when $n$ is large.
The number in boldface shows the value of $n$ for which the size of a given semigroup exceeds the sizes of all of the preceding ones.
For example, the largest semigroups of finite languages exceed the preceding semigroups for $n\ge 12$.

There are two more classes of syntactic semigroups that have the same complexity as the semigroups of finite languages: those of cofinite and reverse definite languages.
The tight upper bound $\lfloor e$ $\cdot$ $ (n-1)! \rfloor$ for $\mathcal{J}$-trivial languages (\cite{BrLi13}) is also a lower bound for definite languages (\cite{BLL13}).
An upper bound of $n((n-1)!-(n-3)!)$ has been shown to hold~\cite{IvNG14} for definite and generalized definite languages~\cite{Gin66}, but it is not known whether this bound is tight.

The asymptotic behaviour of the size $e(n)$ of partially monotonic semigroups is 
$A\frac{B^{2n-1}} {\sqrt{n}}$, 
where $A$ and $B$ are constants~\cite{BLL13}. 
For nearly monotonic semigroups the size is $e(n)+n-1$.

\renewcommand{\arraystretch}{1.3}
\begin{table}
\caption{ Large aperiodic semigroups.}
\label{tab:aperiodic_summary}
\begin{center}
$
\begin{array}{|l||c|c|c|c|c|c|c|c|}    
\hline
\hfill  n\colon &
\ \ \ 1 \ \ \  & \ \ \ 2 \ \ \  &  \ \ \ 3 \ \ \  & \ \ \ 4 \ \ \ &\ \ \ 5 \ \ \ &  6 &  7  &  \ \ \ 8 \ \ \ \\
\hline \hline
\ \text{Monotonic }  \binom{2n-1}{n} &	
{ 1}	&	{ 3}	& { 10} &	35	& 126  &  \ \ \ 462 &  \ \ \ 1,716   &\ \ \ 6,435   \\
\hline
\ \text {Part. mon. } e(n)
 &	-				&	2					&  \ 8					& \bf	38	
 &	192  &\  1,002 &\ \ \  5,336 &\ \ 28,814\\
\hline
\ \text {Near. mon. } e(n)+n-1
&	-				&	{ 3}				& {\bf 10}			&	41
&	196  &\  1,007 &\ \ \    5,342 & \ \ 28,821 \\
\hline
\ \text {Finite } (n-1)!
&	1				&	1
				& 2			&	6
&	24  &\  120 &\ \ \    720 & \ \ 5,040 \\
\hline
\ \mathcal{J}\text{-trivial } \lfloor e$ $\cdot$ $ (n-1)! \rfloor
&	1				&	2
				& 5			&	16
&	65  &\  326 &\ \ \    1,957 & \ \ 13,700 \\
\hline
\ \mathcal{R}\text{-trivial } n!
&	  1				&	2			& 	6					&    24	
&	 120	&\ \ 720 &\ \ \  5,040 &\ \  \bf 40,320\\
\hline
\ \text{Complete unitary with {\bf 1}}
&	  		-		&	3			& 	10					&   \bf  45	
&	 270	&\  1,737 &\   13,280 &\  121,500 \\
\hline
\ \text{Semiconstant tree with {\bf 1} }
&	  -				&	3			& 	10					&  \bf  47	
&	 273	& \  1,849 & \ 14,270 & \ 126,123 \\

\hline
\ \text{Aperiodic }
&	  1				&	3			& 	10					&   	47	
&	?	& ? & ? & ?\\
\hline
\end{array}
$
\end{center}
\end{table}
\newpage

\begin{table}[h]
\caption{ Large aperiodic semigroups continued.}
\label{tab:aperiodic_summary2}
\begin{center}
$
\begin{array}{|l||c|c|c|c|c|}    
\hline
\hfill  n\colon &
\ \ \ 9 \ \ \  & \ \ \ 10 \ \ \  &  \ \ \ 11 \ \ \  & \ \ \ 12 \ \ \ &\ \ \ 13 \ \ \ \\
\hline \hline
\ \text{Monotonic}  &	24,310
	& 92,378		& 352,716  &	1,352,078	& 5,200,300  \\
\hline
\ \text {Part. mon. }
 &	157,184				&	864,146					&  	4,780,008			& 		
 26,572,086 &	148,321,344  \\
\hline
\ \text {Near. mon. }
&	157,192				&	864,155				& 	4,780,018		&	26,572,097
& 148,321,352	   \\
\hline
\ \text {Finite}
&	40,320				&	362,880
		 & 	3,628,800 &	\bf 39,916,800
&	479,001,600  \\
\hline
\  \mathcal{J}\text{-trivial }
& 109,601				&	\bf 986,410
				& 	9,864,101	& 108,505,112	  &	1,302,061,345  \\
\hline
\ \mathcal{R}\text{-trivial }
&	  362,880				&	3,628,800			& 	39,916,800					&    479,001,600	&	 6,227,020,800	 \\
\hline
\ \text{Comp. unit., {\bf 1}}
&	  1,231,200				&	12,994,020			& 	151,817,274				&    2,041,564,499	
&	 29,351,808,000	 \\
\hline
\ \text{Sc. tree, {\bf 1} }
&	  1,269,116				&		14,001,630		& 	169,410,933				&   2,224,759,334	
&	31,405,982,420	  \\

\hline
\ \text{Aperiodic }
&	  ?				&	?			& 	?					&   	?	
&	?	 \\
\hline
\end{array}
$
\end{center}
\end{table}

The remainder of the paper is structured as follows. Section~\ref{sec:term} presents our terminology and notation. 
Our large aperiodic semigroups are defined in Section~\ref{sec:semi}.
The special case of unitary semigroups is then considered in Section~\ref{sec:unitary}, and semiconstant tree semigroups are the topic of Section~\ref{sec:semi2}.
Section~\ref{sec:rev} contains the new results about reversal and product.
Section~\ref{sec:conc} concludes the paper.

\section{Terminology and Notation}
\label{sec:term}

Let $\Sig$ be a finite alphabet. The elements of $\Sig$ are \emph{letters} and the elements of $\Sig^*$ are \emph{words}, where $\Sig^*$ is the free monoid generated by $\Sig$.
The empty word is denoted by $\eps$, and the set of all non-empty words is $\Sig^+$, the free semigroup generated by $\Sig$. A \emph{language} is any subset of $\Sig^*$.

Suppose $n\ge 1$. Without loss of generality we assume that our basic set under consideration is
$Q=\{0,1,\dots,n-1\}$. A~\emph{deterministic finite automaton (DFA)} is a quintuple $\cD=(Q, \Sig, \delta, 0,F)$, where $Q$ is a finite non-empty set of \emph{states}, $\Sig$ is a finite non-empty \emph{alphabet}, $\delta\colon Q\times \Sig\to Q$ is the \emph{transition function}, $0\in Q$ is the \emph{initial state}, and $F\subseteq Q$ is the set of \emph{final states}. We extend $\delta$ to $Q \times \Sig^*$ and to $2^Q\times \Sig^*$ in the usual way.
A DFA $\cD$ \emph{accepts} a word $w \in \Sigma^*$ if $\delta(0,w)\in F$. 
The \emph{language accepted} by  $\cD$ is $L(\cD)=\{w\in\Sig^*\mid \delta(0,w)\in F\}$. 

By the \emph{language of a state} $q$ of $\cD$ we mean the language $L_q(\cD)$ accepted by the DFA $(Q,\Sigma,\delta,q,F)$. A state is \emph{empty} (also called \emph{dead} or a \emph{sink}) if its language is empty. 
Two states $p$ and $q$ of $\cD$ are \emph{equivalent} if $L_p(\cD) = L_q(\cD)$. 
Otherwise, 
states $p$ and $q$ are \emph{distinguishable}. 
A state $q$ is \emph{reachable} if there exists a word $w\in\Sig^*$ such that $\delta(0,w)=q$.
A DFA is \emph{minimal} if all its states are reachable and pairwise distinguishable.

A \emph{transformation} of $Q$ is a mapping of $Q$ into itself. Let $t$ be a transformation of $Q$; then $qt$ is the \emph{image} of $q\in Q$ under $t$. If $P$ is a subset of $Q$, then $Pt = \{qt \mid q \in P\}$.
An arbitrary transformation can be written in the form
\begin{equation*}
t=\left( \begin{array}{ccccc}
0 & 1 &   \cdots &  n-2 & n-1 \\
p_0 & p_1 &   \cdots &  p_{n-2} & p_{n-1}
\end{array} \right ),
\end{equation*}
where $p_q = qt$  for $q\in Q$.
We also use $t = [p_0,\ldots,p_{n-1}]$ as a simplified notation. The {\em composition} of two transformations $t_1$ and $t_2$ of $Q$ is a transformation $t_1 \circ t_2$ such that $q (t_1 \circ t_2) = (q t_1) t_2$ for all $q \in Q$. We usually drop the composition operator ``$\circ$'' and write $t_1t_2$.

Let $\cT_{Q}$ be the set of all $n^n$ transformations of $Q$; then $\cT_{Q}$ is a monoid under  composition. 
The \emph{identity} transformation $\tid$ maps each element to itself, that is, $q\tid=q$ for all $q \in Q$.
 For $k\ge 2$, a transformation (permutation) $t$ of a set $P=\{q_0,q_1,\ldots,q_{k-1}\} \subseteq Q$ is a \emph{$k$-cycle}
if $q_0t=q_1, q_1t=q_2,\ldots,q_{k-2}t=q_{k-1},q_{k-1}t=q_0$.
A $k$-cycle is denoted by $(q_0,q_1,\ldots,q_{k-1})$.
If a transformation $t$ of $Q$ acts like a $k$-cycle on some $P \subseteq Q$, we say that $t$ has a $k$-cycle.
A~transformation has a \emph{cycle} if it has a $k$-cycle for some $k\ge 2$.
For $p\neq q$, a \emph{transposition} is the 2-cycle $(p,q)$.
A~\emph{permutation} of $Q$ is a mapping of $Q$ \emph{onto} itself. 
A transformation is \emph{aperiodic} if it contains no cycles.

In any DFA $\cD$, each word $w\in\Sig^*$ induces a transformation $t_w$ of $Q$ defined by $qt_w=\delta(q,w)$ for all $q\in Q$.
The set of all transformations of $Q$ induced in $\cD$ by non-empty words is the \emph{transition semigroup} of $\cD$.
This semigroup is a subsemigroup of $\cT_Q$.
If $\cD$ is minimal, its transition semigroup is isomorphic to the \emph{syntactic semigroup} of the language $L(\cD)$~\cite{McNP71,Pin97}.  
A language is regular if and only if its syntactic semigroup is finite. The size of the syntactic semigroup of a language is called its \emph{syntactic complexity}.
In this paper we deal only with transition semigroups; consequently, we view syntactic complexity as the size of the transition semigroup. 

If  $T$ is a set of transformations,  then $\langle T \rangle$ is the semigroup generated by $T$. 
If $\cD=(Q,\Sig,\delta,0,F)$ is a DFA, the  transformations induced by letters of $\Sig$ are called  \emph{generators of the transition semigroup} of $\cD$ or simply \emph{generators} of $\cD$. 

\section{Unitary and Semiconstant DFAs}
\label{sec:semi}
We now define a new class of aperiodic DFAs among which are found the largest transition semigroups known to date. We also study several of its subclasses.

A \emph{unitary} transformation $t$, denoted by $(p\to q)$, has $p\neq q$, $pt=q$ and $rt=r$ for all $r\neq p$. 
A~DFA is \emph{unitary} if each of its generators is unitary.
A semigroup is \emph{unitary} if it has a set of unitary generators.

A~\emph{constant} transformation $t$,  denoted by $(Q \to q)$, has $pt=q$ for all $p\in Q$.
A~transformation $t$  is \emph{semiconstant} if it maps a non-empty subset $P$ of $Q$ to a single element $q$ and leaves the remaining elements of $Q$ unchanged. It is denoted by $(P\to q)$. A constant transformation is semiconstant with $P=Q$,  and a unitary transformation $(p\to q)$ is semiconstant with $P=\{p\}$ (or $P=\{p,q\}$).
A~DFA is \emph{semiconstant} if each of its generators is semiconstant.
A~semigroup is \emph{semiconstant} if it has a set of  semiconstant generators.


For each  $n\ge 1$ we shall define several DFAs.
Let $m$, $n_1,n_2,\dots,n_m$ be positive natural numbers.
Also,  let $n=n_1+\dots+n_m$, and for each $i$, $1\le i \le m$, define $r_i$ by 
$r_i= \sum_{j=1}^{i-1} n_j$.
For $i=1,\dots, m$, let $Q_i=\{r_i,r_i+1,\dots, r_{i+1}-1\}$; thus the cardinality of $Q_i$ is $n_i$.
Let $Q=Q_1\cup \dots\cup Q_m=\{0,\dots,n-1\}$; the cardinality of $Q$ is $n$.
The sequence $(n_1,n_2,\dots,n_m)$ is called the \emph{distribution} of $Q$.
\begin{remark}
\label{rem:dist}
The number $d(n)$ of   different distributions for each $n$ is $2^{n-1}$.
This is easily verified by induction on $n$. For $n=1$ there is only one distribution, namely $(1)$; hence $d(1)=2^0$. Suppose that $d(k)=2^{k-1}$ for $k<n$. For $n$, each distribution 
is either $(n)$ or it has  $m$, where $m\in \{n-1,\dots,1\}$, combined with any distribution of the integer $n-m$. 
Hence the number of distributions is
$1 +d(1)+d(2)+\dots +d(n-1)=1+1+2+\dots +2^{n-2}=2^{n-1}.$
For example, for $n=3$ we have the distributions $(3)$, $(2,1)$, $(1,2)$, $(1,1,1)$.
\qedb
\end{remark}
A binary tree is \emph{full} if every vertex has either two children or no children. 
There are $C_{m-1}$ full binary trees, where $C_m=\frac{1}{m+1}\binom{2n}{n}$
is the Catalan number\footnote{http://en.wikipedia.org/wiki/${\rm Catalan}\_{}{\rm number}$}.

Let $\Delta_Q$ be a full binary tree with $m$ leaves labeled $Q_1,\dots,Q_m$ from left to right.  To each node $v \in \Delta_Q$,  we assign  the union $\mathrm{Q}(v)$ of all the sets $Q_i$ labeling the leaves in 
the subtree rooted at~$v$.

With each full binary tree we can associate different distributions.
A full binary tree $\Delta_Q$ with a distribution attached is denoted by $\Delta_Q(n_1,n_2,\dots,n_m)$ and is called the \emph{structure} of $Q$.
This structure will uniquely determine the transition function $\delta$ of the DFAs defined below.
The number of possible structures of $Q$ for a given $n$ is the binomial transform of $C_n$, the Catalan number\footnote{\tt http://oeis.org/A007317}. 

We can denote the structure of $Q$ as a binary expression. For example, the expression $((3,2),(4,1))$ denotes the full binary tree in which the leaves are labeled $Q_1$, $Q_2$, $Q_3$, and $Q_4$, where 
$|Q_1|=3, |Q_2|=2, |Q_3|=4, |Q_4|=1$, and the interior nodes are labeled by $Q_1\cup Q_2$, $Q_3\cup Q_4$ and $Q_1\cup Q_2 \cup Q_3\cup Q_4$. 
On the other hand, the expression $(((3,2),4),1)$ has interior nodes labeled 
$Q_1\cup Q_2$, $Q_1\cup Q_2 \cup Q_3$ and $Q_1\cup Q_2 \cup Q_3\cup Q_4$.

\begin{definition}[Transformations]
\label{def:semi}
\bd
\item[Type 1:]
Suppose $n> 1$ and $(n_1,n_2,\dots,n_m)$ is a distribution of $Q$. For
all $i=1,\dots,m$ and $q, q+1\in Q_i$ Type 1 transformations are the unitary transformations $(q\to q+1)$ and $(q+1\to q)$. 
\item[Type 2:]
Suppose $n> 1$ and $(n_1,n_2,\dots,n_m)$ is a distribution of $Q$.
If $1\le i\le m-1$  and $i<j\le m$, for each $q\in Q_i$ and $p\in Q_j$,
$(q\to p)$ is a Type 2
transformation.
\item[Type 3:]
Suppose $n> 1$ and $\Delta_Q(n_1,n_2,\dots,n_m)$ is a structure of $Q$.
For each internal node $w$ the semiconstant transformation
$(\mathrm{Q}(w) \to \min(\mathrm{Q}(w)))$
is of Type 3.
\item[Type 4:]
The identity transformation $\one$ on $Q$ is of Type 4.
\ed
\end{definition}

For a fixed $i$ there are $2n_i-2$ Type~1 transformations and $n_i(n_{i+1}+\dots+n_m)$ Type~2 transformations. The number of Type~3 transformations is $m-1$.

Note that the distribution $(n_1,n_2,\dots,n_m)$ affects transformations of Types 1, 2, and 3, whereas the binary tree affects only transformations of Type 3.

In the following DFAs the transition function is defined by a set of transformations and the alphabet consists of letters inducing these transformation.

\begin{definition}[DFAs]\label{def:scDFA}
Suppose $n >1$.
\be
\item
If there is no $i\in \{1,\dots, m-1\}$ such that $|Q_i|=|Q_{i+1}|=1$, 
then any DFA of the form
$\cD_{u}(n_1,\dots,n_m)=(Q,\Sig_u,\delta_{u},0,\{n-1\})$, where $\delta_u$ has all the  trans\-for\-mations of Types 1 and 2, is a \emph{complete unitary} DFA. 
\item
$\cD_{ui}(n_1,\dots,n_m)=(Q,\Sig_{ui},\delta_{ui},0,\{n-1\})$ is $\cD_{u}(n_1,\dots,n_m)$ with $\one$ added.
\item
Any DFA
$\cD_{sct}(\Delta_Q(n_1,\dots,n_m))=(Q,\Sig_{sct},\delta_{sct},0,\{n-1\})$, where $\delta_{sct}$ has all the transformations of Types 1, 2 and 3, is a \emph{semiconstant tree} DFA.
\item
$\cD_{scti}(\Delta_Q(n_1,\dots,n_m))=(Q,\Sig_{scti},\delta_{scti},0,\{n-1\})$ is $\cD_{sct}(\Delta_Q(n_1,\dots,n_m))$ with $\one$ added.
\ee
\end{definition}

Using terminology analogous to that of~\cite{Die10}, we define a \emph{bipath (bidirectional path)}
to be a graph $(V,E)$, where $V=\{v_0,\dots,v_{k-1}\}$ for some $k\ge 1$, and for each $v_q, v_{q+1}\in V$ there are two edges 
$(v_q, v_{q+1})$ and $(v_{q+1},v_q)$. 
If $k=1$, the graph $(\{v_0\}, \emp)$ is also considered a (trivial) \bipath.
If we ignore self-loops, each edge in the graph uniquely determines a unitary transformation, 
and  the states in each $Q_i$ in $\cD_{u}(n_1,\dots,n_m)$ constitute a \emph{\bipath}.
 Also, the  graph of $\cD_{u}(n_1,\dots,n_m)$ is a sequence 
$(Q_1,\dots,Q_m)$ of \bipaths, where there are transitions from every $q$ in $Q_i$ to every  $p$ in $Q_j$, if $i<j$.

\begin{example}
\label{ex:semi}
Figure~\ref{fig:semi} shows three examples of unitary DFAs. 
In Fig.~\ref{fig:semi}~(a) we have DFA $\cD_u(3)$, where the letter $a_{pq}$ induces the unitary transformation $(p\to q)$.
In Fig.~\ref{fig:semi}~(b) we present $\cD_u(3)$, where only the transitions between \emph{different} states are included to simplify the figure. Also,  the letter labels are deleted because they are easily deduced.
Next, in Figs.~\ref{fig:semi}~(c) and~(d), we have the DFAs $\cD_u(3,1)$ and $\cD_u(2,2,2)$,  respectively. We shall return to these examples later.
\qedb
\end{example}

\begin{figure}[h]
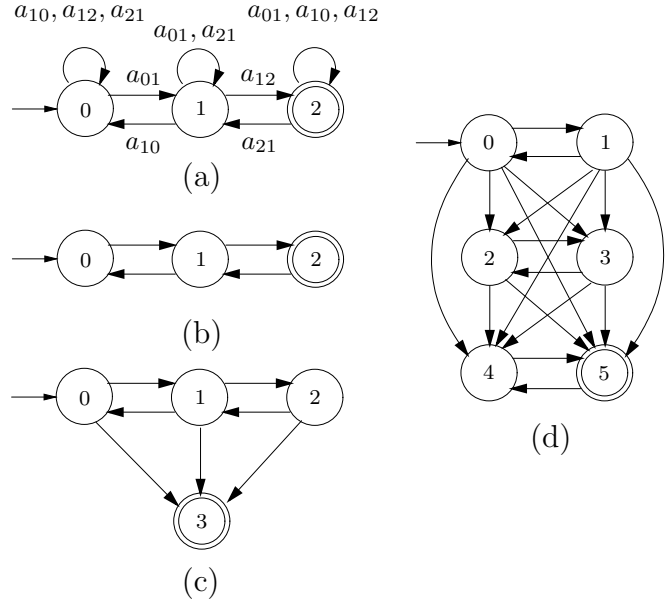

\begin{center}
\input semi.eepic
\end{center}
\caption{Unitary DFAs: (a) $\cD_u(3)$; (b) $\cD_u(3)$ simplified; 
(c) $\cD_u(3,1)$; (d)  $\cD_u(2,2,2)$. } 
\label{fig:semi}
\end{figure}

\begin{remark}
\label{rem:minimal}
All four DFAs of Definition~\ref{def:scDFA} are minimal as is easily verified. Hence the syntactic semigroup of the language of each DFA is isomorphic to the transition semigroup of the DFA.
\qedb
\end{remark}

\section{Unitary Semigroups}
\label{sec:unitary}

We study unitary semigroups because their generators are the simplest.
We begin with three previously studied special semigroups which are subsemigroups of a unitary semigroups. 

\subsection{Monotonic Semigroups}
\label{ssec:mon}
Monotonic semigroups were previously studied in~\cite{BLL13,GoHo92,How71}.
A transformation $t$ of $Q$ is \emph{monotonic} if there exists a total order $\le$ on $Q$ such that, for all $p,q \in Q$, $p \le q$ implies $pt \le qt$. 
Note that the identity transformation is monotonic.
A~DFA is \emph{monotonic} if each of its input transformations is monotonic.
A semigroup is \emph{monotonic} if it has a set of  monotonic generators.
From now on we assume that $\le$ is the usual order on integers.

The following result of~\cite{GoHo92} is somewhat modified for our purposes:
\begin{proposition}[Gomes and Howie]
\label{prop:GoHo}
The set $M$ of all $\binom{2n-1}{n}-1$ monotonic transformations other than $\one$ is an aperiodic semigroup generated by 
$$G_M=\{(q \to q+1)\mid 0\le q\le n-2\} \cup \{(q \to q-1)\mid 1\le q\le n-1\},$$
and no smaller set of unitary transformations generates $M$.
\end{proposition}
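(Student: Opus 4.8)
The plan is to establish three things: (1) the set $M$ is closed under composition, hence a semigroup; (2) it is aperiodic; (3) $G_M$ generates $M$, and no smaller set of unitary transformations does. Aperiodicity is immediate, since a monotonic transformation cannot act as a nontrivial permutation on any subset — if $t$ restricted to $P=\{q_0<\dots<q_{k-1}\}$ were a $k$-cycle, monotonicity would force $q_0t<q_1t<\dots<q_{k-1}t$, so the restriction would be the identity, contradicting $k\ge 2$. Closure under composition is the standard observation that if $p\le q\Rightarrow pt_1\le qt_1$ and the same for $t_2$, then $p\le q\Rightarrow pt_1\le qt_1\Rightarrow (pt_1)t_2\le(qt_1)t_2$; and the composition of non-identity monotonic maps is never the identity unless both factors are permutations, but the only monotonic permutation of $Q$ under a fixed order is $\one$ itself, which is excluded. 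The counting fact $|M|=\binom{2n-1}{n}-1$ — that there are exactly $\binom{2n-1}{n}$ monotonic transformations including $\one$ — I would take as classical (it is the number of order-preserving self-maps of an $n$-chain, e.g. by a stars-and-bars / lattice-path bijection with the image sequence $p_0\le p_1\le\dots\le p_{n-1}$ together with a choice encoding which points move; this is the content cited from Gomes--Howie).

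The substantive part is showing $\langle G_M\rangle = M$. The inclusion $\langle G_M\rangle\subseteq M$ follows from closure, since each generator is monotonic and non-identity. For the reverse inclusion, I would argue that every monotonic $t\ne\one$ is a product of the "elementary" unitary steps $u_q^+=(q\to q+1)$ and $u_q^-=(q\to q-1)$. The natural approach is induction on a suitable complexity measure of $t$, e.g. $\sum_{q} |qt - q|$ or the number of states not fixed by $t$: given $t\ne\one$, pick a state $q$ with $qt\ne q$; if $qt>q$, one can typically peel off a generator $u_r^+$ (for an appropriate $r$) so that $u_r^+ t' = t$ or $t = t'' u_r^+$ with $t',t''$ monotonic of smaller complexity. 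Care is needed because a unitary generator only moves one state, so one must check that premultiplying or postmultiplying by it keeps the map monotonic and genuinely reduces complexity — the cleanest route is probably to build $t$ from $\one$ by a sequence of "legal" single-state moves that never violates monotonicity at any intermediate stage, ordering the moves carefully (for instance, moving states upward in decreasing order of index, downward in increasing order of index) so that each intermediate transformation stays monotonic. I expect this bookkeeping — proving that the intermediate partial transformations remain monotonic and the process terminates — to be the main obstacle.

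Finally, for minimality, I would show $G_M$ is a minimal generating set among unitary transformations. The key is that a unitary transformation $(p\to q)$ with $|p-q|\ge 2$ is \emph{not} monotonic (it moves $p$ past an intermediate state that stays fixed, reversing their order), so any unitary generating set for $M$ must consist of the "short" unitary maps $u_q^+,u_q^-$ — these are exactly the elements of $G_M$, so $G_M$ is the \emph{only} unitary generating set contained in $M$, hence certainly no smaller one exists. One should also confirm each $u_q^{\pm}$ is actually needed: $u_q^+$ is the only element of $M$ whose underlying directed edge-set on the "bipath" moves $q$ to $q+1$ while fixing everything else, and it cannot be written as a product of the others since any such product either fixes $q$ or moves some state other than by a single step — more carefully, one checks $u_q^+$ is not in the semigroup generated by $G_M\setminus\{u_q^+\}$ by a small rank/image argument. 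This last verification is routine once the non-monotonicity of long unitary maps is in hand.
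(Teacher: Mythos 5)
This proposition is not proved in the paper at all: it is imported (``somewhat modified'') from Gomes and Howie \cite{GoHo92}, with the generation statement going back to Howie \cite{How71}, so there is no in-paper argument to match your proposal against. Judged on its own terms, your architecture is the right one and the parts you actually carry out are correct: aperiodicity (a monotonic map permuting a subset must fix it pointwise), closure of $M$ (a composite equal to $\one$ forces both factors to be monotonic bijections, hence $\one$), and the key observation for minimality, namely that a unitary $(p\to q)$ with $|p-q|\ge 2$ violates monotonicity against the fixed state $p\pm 1$, so the unitary elements of $M$ are exactly the $2n-2$ elements of $G_M$ and every unitary generating set is a subset of $G_M$. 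One small slip: a monotonic transformation is completely determined by its weakly increasing image sequence $p_0\le\dots\le p_{n-1}$, so the count $\binom{2n-1}{n}$ needs no extra ``choice encoding which points move.''

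The two steps you defer are where the content lives, and they are not equally routine. (i) Generation: your recipe (move states upward in decreasing order of index, each all the way to its target by consecutive steps, then downward in increasing order of index) does work, but the verification you skip is exactly the proof: one must check that at each elementary step only the intended state occupies the position being moved (states already processed sit at or beyond their targets, unprocessed states sit at their original positions) and that every intermediate map is monotonic. Beware that the more naive peeling ``reset the largest non-fixed state'' fails, e.g.\ for $t=[2,2,2]$ with $n=3$, where resetting $1t$ to $1$ yields the non-monotonic $[2,1,2]$; the order of moves genuinely matters. (ii) Non-redundancy of each generator is \emph{not} a ``small rank/image argument'': $M$ contains many non-unitary elements of rank $n-1$, and a product of generators can retain rank $n-1$. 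Comparing images only pins the last factor of a hypothetical factorization of $(q\to q+1)$ down to $\{(q\to q+1),(q\to q-1)\}$, and comparing kernels pins the first factor down to $\{(q\to q+1),(q+1\to q)\}$; eliminating the unwanted candidates requires tracking how the image set evolves through the whole product (this is essentially the rank computation of \cite{GoHo92}). So the proposal is a correct and well-aimed outline, but as written the central claims $\langle G_M\rangle\supseteq M$ and the irredundancy of $G_M$ are announced rather than proved.
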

\begin{corollary}
\label{cor:mon}
The transition semigroup of $\cD_{ui}(n)$ is the semigroup $M\cup \{\one\}$ of all monotonic transformations.
\end{corollary}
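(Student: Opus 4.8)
The plan is to derive the corollary directly from Proposition~\ref{prop:GoHo} by identifying the generating set of $\cD_{ui}(n)$ with $G_M \cup \{\one\}$. First I would unwind the definitions: when $m=1$ we have the single block $Q_1 = Q = \{0,\dots,n-1\}$, so there are no Type~2 transformations (these require $i < j \le m$, impossible when $m=1$), no Type~3 transformations (there are $m-1=0$ internal nodes), and the Type~1 transformations are exactly $(q\to q+1)$ and $(q+1\to q)$ for $q,q+1 \in Q_1$, i.e.\ for $0 \le q \le n-2$. This is precisely the set $G_M$. Since $\cD_{ui}(n)$ is $\cD_u(n)$ with $\one$ adjoined, its set of generators is $G_M \cup \{\one\}$, and hence its transition semigroup is $\langle G_M \cup \{\one\}\rangle = \langle G_M\rangle \cup \{\one\}$ (adjoining the identity to a semigroup's generators just adds $\one$ to the semigroup).

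Next I would invoke Proposition~\ref{prop:GoHo}, which states that $\langle G_M\rangle = M$, the set of all $\binom{2n-1}{n}-1$ monotonic transformations other than $\one$, and that $M$ is aperiodic. Therefore the transition semigroup of $\cD_{ui}(n)$ is $M \cup \{\one\}$, which is exactly the set of all monotonic transformations of $Q$ (including the identity, which is monotonic as noted in the text). One should also remark, for completeness, that $\cD_{ui}(n)$ satisfies the side condition in Definition~\ref{def:scDFA}(1) vacuously — the condition forbids two consecutive singleton blocks, which cannot occur when there is only one block — so $\cD_u(n)$ and $\cD_{ui}(n)$ are genuinely well-defined instances of the definition.

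This argument is essentially a bookkeeping exercise, so I do not anticipate a serious obstacle; the only point requiring a little care is the claim that $\langle G_M \cup \{\one\}\rangle = \langle G_M\rangle \cup \{\one\}$. This holds because $\one$ acts as a two-sided identity under composition, so any product of generators involving $\one$ equals the product with all occurrences of $\one$ deleted — if that leaves a non-empty product it lies in $\langle G_M\rangle$, and otherwise the product is $\one$ itself. Thus adjoining $\one$ to the alphabet adds exactly one element to the transition semigroup, giving $M \cup \{\one\}$ as claimed. (The cardinality $\binom{2n-1}{n}$ reported in Table~\ref{tab:aperiodic_summary} for the monotonic row then follows immediately, since $|M \cup \{\one\}| = (\binom{2n-1}{n}-1) + 1$.)
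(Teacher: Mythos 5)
Your proposal is correct and follows exactly the route the paper intends: the paper gives no explicit proof of this corollary, treating it as immediate from Proposition~\ref{prop:GoHo} once one observes (as the surrounding text does) that $\cD_{ui}(n)$ has only the Type~1 generators, which coincide with $G_M$, plus $\one$. Your bookkeeping of the definitions and the remark that adjoining $\one$ adds exactly one element are precisely the details being elided.
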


Note that $\cD_{u}$ has transitions of Type 1 only, and $D_{ui}$ has  Type 1 and~4 only. 
Figure~\ref{fig:semi}~(b) shows $\cD_{u}(3)$ and $\cD_{ui}(3)$, if $\one$ is added. The transition semigroup of $\cD_{ui}(3)$ has ten elements and is the largest aperiodic semigroup for $n=3$~\cite{BLL13}.

 Note also that there are monotonic semigroups that do not have unitary generating sets;  each monotonic semigroup, however, is a subsemigroup of the transition semigroup of $\cD_{ui}(n)$ consisting of all monotonic transformations.

\subsection{Partially Monotonic Semigroups}
\label{ssec:pmon}

A \emph{partial transformation} $t$ of $Q$ is a partial mapping of $Q$ into itself.
If $t$ is defined for $q\in Q$, then $qt$ is the image of $q$ under $t$; otherwise, we write $qt=\Box$.
By convention, $\Box t=\Box$.
The \emph{domain} of $t$ is  the set $dom(t)=\{q\in Q \mid qt\neq \Box\}$.
A~partial transformation is \emph{monotonic} if there exists an order $\le$ on $Q$ such that for all $p,q\in dom(t)$,  $p\le q$ implies $pt \le qt$.

Semigroups of monotonic partial transformations were studied by Gomes and Howie~\cite{GoHo92}.  They were adapted to automata in~\cite{BLL13}. 
We follow~\cite{BLL13} by starting with all partial transformations of 
$Q\setminus \{n-1\}$  and adding state $(n-1)$  for the undefined value $\Box$.
We call the resulting transformations \emph{partially monotonic}.
The following is an adaptation of the results of~\cite{GoHo92}:
 \begin{proposition}
 \label{prop:pmon}
 For  $n\ge 2$, the DFA $\cD_{ui}(n-1,1)=(Q,\Sig_{ui},\delta_{ui},0,\{n-1\})$ has the following properties:
 \be
 \item 
 Each of the $3n-4$ transformations of $\cD_{ui}(n-1,1)$ is partially monotonic.
 Thus $\cD_{ui}(n-1,1)$ is partially monotonic, and hence aperiodic.
 \item
 The transition semigroup $PM_Q$ of $\cD_{ui}(n-1,1)$ consists of all the $e(n)$ partially monotonic transformations of $Q$, where
 \begin{equation}
 e(n)=\sum_{k=0}^{n-1} \binom{n-1}{k} \binom{n+k-2}{k}.
 \end{equation}
 \item
 Each generator is idempotent, and $3n-4$ is the smallest number of idempotent  generators of $PM_Q$. Moreover, each generator except $\one$ is unitary, and $3n-5$ is the smallest number of unitary generators of $PM_Q\setminus \{\one\}$.
 \ee
 \end{proposition}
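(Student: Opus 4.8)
The plan is to establish the three parts in sequence, leaning heavily on the correspondence between partial transformations of $Q\setminus\{n-1\}$ and total transformations of $Q$ in which $n-1$ plays the role of the undefined value $\Box$. Throughout, write $P = Q\setminus\{n-1\}$. First I would pin down the generators of $\cD_{ui}(n-1,1)$ explicitly: since the distribution is $(n-1,1)$, we have $Q_1 = P$ and $Q_2 = \{n-1\}$; Type~1 gives the $2(n-1)-2 = 2n-4$ unitary transformations $(q\to q+1)$ and $(q+1\to q)$ for consecutive $q,q+1$ inside $Q_1$; Type~2 gives the $n-1$ unitary transformations $(q\to n-1)$ for $q\in Q_1$; and Type~4 adds $\one$. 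That is $2n-4 + (n-1) + 1 = 3n-4$ generators. For Part~1, each Type~1 generator is monotonic on $P$ and fixes $n-1$, hence corresponds to a monotonic partial transformation; each Type~2 generator $(q\to n-1)$ corresponds to the partial transformation undefined at $q$ and identity elsewhere on $P$, which is trivially monotonic; $\one$ corresponds to the identity partial map. Monotonicity is preserved under composition, so every element of the transition semigroup is partially monotonic, and since partially monotonic transformations have no cycles (a cycle among states of $P$ would force a non-trivial permutation of a $\le$-ordered set, and $n-1$ is fixed by all generators) the semigroup is aperiodic.

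For Part~2, the containment $PM_Q \subseteq \{\text{partially monotonic transformations}\}$ is Part~1. For the reverse inclusion I would invoke Proposition~\ref{prop:GoHo} as the engine: the Type~1 generators are exactly the generating set $G_M$ of Gomes--Howie, now acting on the $(n-1)$-element set $P$, so $\langle \text{Type 1}\rangle$ already realises every monotonic \emph{total} transformation of $P$ (extended by fixing $n-1$). It remains to show that composing these with the Type~2 ``deletion'' generators $(q\to n-1)$ produces every monotonic \emph{partial} transformation. The standard argument: an arbitrary monotonic partial transformation $t$ with $\mathrm{dom}(t) = \{q_{i_1} < \dots < q_{i_k}\}$ can be written as first applying deletions to collapse $P\setminus\mathrm{dom}(t)$ to $n-1$ in some order, then applying a monotonic total transformation on the surviving states to hit the desired images — one checks the composite is well-defined and monotonic and equals $t$ on $\mathrm{dom}(t)$ while sending everything else to $n-1$. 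The count $e(n) = \sum_{k=0}^{n-1}\binom{n-1}{k}\binom{n+k-2}{k}$ is then obtained by summing over $k = |\mathrm{dom}(t)|$: choose the $k$-element domain in $\binom{n-1}{k}$ ways, and the number of monotonic maps from a $k$-chain into the $(n-1)$-chain is $\binom{(n-1)+k-1}{k} = \binom{n+k-2}{k}$ (monotone, not strictly monotone, so multisets of size $k$ from $n-1$ values). I would cite~\cite{GoHo92,BLL13} for this enumeration rather than reprove it.

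For Part~3, idempotency of each generator is a direct check: $(q\to q\pm 1)$ is \emph{not} idempotent in general, so here I must be careful — actually the claim is that the generators listed are idempotent, which forces the generating set to be the \emph{idempotent} Type~1 transformations; I would re-read that a Type~1 transformation $(q\to q+1)$ satisfies $(q\to q+1)^2 = (q\to q+1)$ since after the first application state $q$ is gone and $q+1$ is fixed, so indeed every unitary transformation $(p\to r)$ with $p\ne r$ is idempotent. Thus all $3n-4$ generators are idempotent. The lower bound ``$3n-4$ idempotent generators are necessary'' and ``$3n-5$ unitary generators are necessary for $PM_Q\setminus\{\one\}$'' is the part I expect to be the main obstacle: it requires identifying which elements of $PM_Q$ are \emph{indecomposable} (not a product of two strictly ``smaller'' elements) and showing each must appear in any generating set. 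I would argue that the $2n-4$ ``adjacent swaps/shifts'' and the $n-1$ ``deletions'' are each indecomposable — a unitary transformation cannot be a product of two non-identity partial transformations without one of them being a permutation, and there are no non-trivial permutations here — while any non-unitary, non-identity partially monotonic transformation visibly factors through these, and $\one$ must be added separately for the monoid. Pinning down indecomposability rigorously, and handling the distinction between the idempotent count $3n-4$ (which includes $\one$) and the unitary count $3n-5$ for $PM_Q\setminus\{\one\}$, is the delicate bookkeeping; I would follow the corresponding argument in~\cite{BLL13} closely here.
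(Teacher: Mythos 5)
Your proposal is correct and matches the paper's treatment: the paper gives no proof of this proposition, presenting it as an adaptation of the results of Gomes and Howie~\cite{GoHo92} (via~\cite{BLL13}), and your sketch is a faithful reconstruction of that adaptation --- the right count of $3n-4$ generators, the deletions-then-total-monotonic-map factorization for Part~2, and the multiset count $\binom{n+k-2}{k}$ for monotonic maps of a $k$-element domain into the $(n-1)$-chain. Your momentary worry in Part~3 resolves correctly (every unitary $(p\to r)$ with $p\neq r$ is idempotent since $r$ is fixed), and deferring the minimal-generating-set lower bounds to~\cite{GoHo92} and~\cite{BLL13} is exactly what the paper itself does.
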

\begin{example}
\label{ex:pmon}
There are eight monotonic partial transformations of the set $Q=\{0,1\}$, namely:
$[\Box,\Box]$, $[0,\Box]$, $[1,\Box]$, $[\Box,0]$, $[\Box,1]$, $[0,0]$, $[0,1]$, $[1,1]$.
When we replace $\Box$ by state 2, the eight partial transformations become 
total transformations
$[2,2,2]$, $[0,2,2]$, $[1,2,2]$, $[2,0,2]$, $[2,1,2]$, $[0,0,2]$, $[0,1,2]$, $[1,1,2]$.
The $9-4=5$ generators of $\cD_{ui}(2,1)$ are: $(0\to 1)=[1,1,2]$, $(0\to 2)=[2,1,2]$, $(1\to 0)=[0,0,2]$, $(1\to 2)=[0,2,2]$ and 
$\one$.
The DFA of Figure~\ref{fig:semi}~(c) is an example of  $\cD_{ui}(3,1)$.
\qedb
\end{example}
For $n\ge 4$ the semigroup of all partially monotonic transformations is larger than the semigroup of all monotonic transformations.

 Note that there are partially monotonic semigroups that do not have unitary generating sets;  each partially monotonic semigroup, however, is a subsemigroup of the transition semigroup of $\cD_{ui}(n-1,1)$ consisting of all partially monotonic transformations.

\subsection{Other Previously Studied Aperiodic Semigroups}

 As we have mentioned in the introduction, the syntactic complexity of five other language classes was studied previously. \emph{Cofinite} languages are complements of finite languages, and therefore their minimal DFAs have the same transition semigroup as the DFAs of finite languages.

The reverse $w^R$ of a word $w\in \Sig^+$ is $w$ spelled backwards and $\eps^R=\eps$.
The reverse of a language $L$ is $L^R=\{w^R\mid w\in L\}$.
A language is \emph{definite} if it has the form $E\cup \Sig^* F$, where $E$ and $F$ are finite.
It is \emph{reverse definite} if its reverse is definite, that is, if it has the form  $E\cup F\Sig^*$, where $E$ and $F$ are finite. It was shown in~\cite{BLL13} that the syntactic complexity of reverse definite languages is the same as that of finite languages. A lower bound of $\lfloor e(n-1)!\rfloor$ was proved for definite languages; it is an open question whether this is also an upper bound.

The  well known Green relations define $\mathcal{R}$-trivial and $\mathcal{J}$-trivial monoids (semigroups with an identity).
If $M$ is a monoid, the relation $\mathbin{\gR}$ is defined by 
$ s \mathbin{\gR} t \mathbin{\Lra} Ms = Mt$ for  $s,t\in M$. A monoid is  \emph{$\mathcal{R}$-trivial} if $ s \mathbin{\gR} t$ implies $s=t$.
The relation $\mathbin{\gJ}$ is defined by $  s \mathbin{\gJ} t \mathbin{\Lra} MsM = MtM, $
and $M$ is \emph{$\mathcal{J}$-trivial} if $MsM=MtM$ implies $s=t$.
Languages whose minimal DFAs have $\mathcal{R}$-trivial ($\mathcal{J}$-trivial) transition monoids are also called \emph{$\mathcal{R}$-trivial} (\emph{$\mathcal{J}$-trivial}).

Syntactic complexities of $\mathcal{R}$-trivial and $\mathcal{J}$-trivial languages were studied by Brzozowski and Li~\cite{BrLi13}. Consider the natural order $<$ on $Q$. We say that a transformation $t$ is non-decreasing if  $q \le qt$ for all $q \in Q$.  Let $\mathcal{F}_Q$ be the set of all non-decreasing transformations. The size of $\mathcal{F}_Q$ is $n!$.

It was shown in~\cite{BrFi80} that $L$ is an $\mathcal{R}$-trivial language if and only if its  minimal  DFA is partially ordered,  or equivalently, if its transition semigroup contains only non-decreasing transformations.
Thus the largest semigroup generated by DFAs accepting $\mathcal{R}$-trivial languages is $\mathcal{F}_Q$.

\begin{proposition}
The transition semigroup of $\cD_{ui}(1,1,\ldots,1)$ is the semigroup $\mathcal{F}_Q$ of all non-decreasing transformations.
\end{proposition}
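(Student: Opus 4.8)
The plan is to unwind the definitions, dispose of the easy inclusion, and then prove the hard inclusion by a peeling induction. For the distribution $(1,1,\dots,1)$ we have $m=n$ and each block $Q_i=\{i-1\}$ is a singleton, so there are no transformations of Type~1; the Type~2 transformations are exactly the unitary maps $(q\to p)$ with $0\le q<p\le n-1$; and Type~4 contributes only $\one$. Since the transition semigroup of a DFA is generated by the transformations induced by its letters, it suffices to prove that $\langle T\rangle=\mathcal{F}_Q$, where $T=\{\one\}\cup\{(q\to p)\mid 0\le q<p\le n-1\}$.

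The inclusion $\langle T\rangle\subseteq\mathcal{F}_Q$ is immediate: each generator is non-decreasing ($\one$ trivially, and $(q\to p)$ with $q<p$ sends $q$ to $p>q$ and fixes every other state), and a composite of non-decreasing transformations is non-decreasing because $q\le qt_1\le(qt_1)t_2=q(t_1t_2)$. The same observation shows that every product of non-identity generators has a non-fixed point, hence is not $\one$, which is why $\one$ must be added as a generator; this is also consistent with the value $|\mathcal{F}_Q|=n!$ recorded earlier.

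For the converse $\mathcal{F}_Q\subseteq\langle T\rangle$ I would induct on the number $k(t)$ of non-fixed points of a non-decreasing transformation $t$. If $k(t)=0$ then $t=\one\in T$. If $k(t)\ge1$, the key step is a peeling lemma: there is a state $q'$ with $q't\ne q'$ whose image $p':=q't$ is a fixed point of $t$. One obtains $q'$ by starting from any non-fixed state and iterating $t$: because $t$ is non-decreasing the iterates strictly increase until they reach a fixed point, so one may take $q'$ to be the last non-fixed state on that chain, and then $q'<p'$ automatically. A direct check gives $t=(q'\to p')\,t'$, where $t'$ agrees with $t$ everywhere except that $q't'=q'$: for $r\ne q'$ both sides send $r$ to $rt$, and for $r=q'$ the right side sends $q'$ to $p't=p'=q't$, since $p'$ is fixed by $t$. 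Since $t'$ is non-decreasing with $k(t')=k(t)-1$, it lies in $\langle T\rangle$ by the inductive hypothesis, and as $(q'\to p')\in T$ we conclude $t=(q'\to p')\,t'\in\langle T\rangle$.

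The one delicate point is the peeling lemma: the factorization $t=(q'\to p')\,t'$ is valid only because $p'$ is already a fixed point of $t$, so one must be careful to peel off a generator whose target is a fixed point rather than just an arbitrary non-fixed state together with its image — the chain-following argument is exactly what guarantees that such a state exists. The remaining verifications are routine; alternatively, the induction can be run on the weight $w(t)=\sum_{q\in Q}(qt-q)$, which decreases by $p'-q'\ge1$ at each peeling step.
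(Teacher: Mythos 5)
Your proof is correct, but it takes a different route from the paper's. The paper does not induct at all: it observes that an arbitrary non-decreasing $t=[p_0,\dots,p_{n-2},n-1]$ with $p_q\ge q$ factors explicitly as $t=t_{n-2}t_{n-3}\cdots t_1t_0$ with $t_q=(q\to p_q)$, applied in decreasing order of source state. The ordering does the work that your peeling lemma does for you: when $t_q$ fires, every factor still to come has source $j<q\le p_q$, so the image $p_q$ is never disturbed afterwards, and there is no need for the target to be a fixed point of $t$. Your inductive argument on the number of non-fixed points (or on the weight $\sum_q(qt-q)$) is sound --- the chain-following argument correctly produces a non-fixed $q'$ whose image $p'$ is fixed by $t$, and the factorization $t=(q'\to p')\,t'$ checks out under the paper's left-to-right composition convention --- but it is longer and the ``delicate point'' you flag is exactly the complication the paper's decreasing-index ordering sidesteps. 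Both arguments yield the same quantitative conclusion, namely that at most $n-1$ unitary factors suffice; your identification of the generators for the distribution $(1,1,\dots,1)$ and the easy inclusion $\langle T\rangle\subseteq\mathcal{F}_Q$ match the paper.
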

\begin{proof}
DFA 
$\cD_{ui}(1,1,\ldots,1)$ has only unitary transformations of Type~2. 
They generate only non-decreasing transformations, since each of them preserves  the natural order. An arbitrary non-decreasing transformation has the form
\begin{equation*}
t=\left( \begin{array}{ccccc}
0 & 1 &   \cdots &  n-2 & n-1 \\
p_0 & p_1 &   \cdots &  p_{n-2} & {n-1}
\end{array} \right ),
\end{equation*}
where $p_q\ge q$ for $q=0,\dots,n-2$.
Since $\cD_{ui}(1,1,\ldots,1)$ contains all unitary transformations of the form $(q\to p)$ for $q\le p$, all transformations $t_q=(q\to p_q)$ are present.
One verifies that applying $t_{n-2}t_{n-3}\cdots t_1t_0$ results in $t$.
Thus each non-decreasing transformation can be generated by  at most $n-1$ unitary transformations.  \qed
\end{proof}

 Note that there are semigroups with only non-decreasing transformations that do not have unitary generating sets;  each such semigroup, however, is a subsemigroup of $\mathcal{F}_Q$.
Since every $\mathcal{J}$-trivial language is also $\mathcal{R}$-trivial, the transition semigroups of all minimal DFAs accepting $\mathcal{J}$-trivial languages are also subsemigroups of $\cD_{ui}(1,1,\ldots,1)$.

\subsection{General Unitary Semigroups}
A~set $\{t_0,\dots,t_{k-1}\}$ of unitary transformations is \emph{$k$-cyclic} if  it has the form
$t_0=(q_0\to q_1)$, $t_1=(q_1\to q_2),\dots, t_{k-2}=(q_{k-2}\to q_{k-1})$, $t_{k-1}=(q_{k-1}\to q_0),
$
where the $q_i$ are distinct.

\begin{lemma} 
\label{lem:cycles}
Let $T$ be a set of unitary transformations.
\be
\item
If $T$ has a $k$-cyclic subset $\{t_0,\dots,t_{k-1}\}$ with $k\ge3$, then $\langle T \rangle$ is not aperiodic.
\item
If $T$ contains a subset $T_6=\{t_{01},t_{10},t_{12},t_{13},t_{21},t_{31}\}$ where 
$t_{i,j}=(q_i\to q_j)$ and 
$q_0,q_1,q_2,q_3\in Q$, then $\langle T \rangle$ is not aperiodic. 
\ee
\end{lemma}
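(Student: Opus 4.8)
The plan is to prove both statements in the same way: for each hypothesis I will exhibit one explicit element of $\langle T\rangle$ — a composition of the listed generators — and check that it acts as a non-trivial permutation of a subset of $Q$ of size at least two. By the characterisation recalled in the introduction (a transition semigroup is aperiodic iff no word induces a non-trivial permutation of any subset of two or more states), this is exactly what is needed to conclude that $\langle T\rangle$ is not aperiodic. Throughout I use the paper's convention that in $t_1\circ t_2$ the transformation $t_1$ acts first.

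For part~1, take $w = t_{k-1}\circ t_{k-2}\circ\cdots\circ t_1\circ t_0 \in \langle T\rangle$, so that, reading left to right, $t_{k-1}$ is applied first and $t_0$ last. Since each $t_j=(q_j\to q_{j+1\bmod k})$ moves only the state $q_j$, tracking the orbit of each $q_i$ is immediate: $q_0$ is fixed by $t_{k-1},\dots,t_1$ and then sent to $q_1$ by $t_0$; for $1\le i\le k-2$ the state $q_i$ is fixed until $t_i$ sends it to $q_{i+1}$, which is then fixed by $t_{i-1},\dots,t_0$; and $q_{k-1}$ is sent to $q_0$ by $t_{k-1}$, fixed by $t_{k-2},\dots,t_1$, and then sent to $q_1$ by $t_0$. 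Hence $w$ restricted to $P=\{q_1,\dots,q_{k-1}\}$ is the cycle $q_1\to q_2\to\cdots\to q_{k-1}\to q_1$, of length $k-1\ge 2$ because $k\ge 3$. This computation is routine; the only choice that matters is to compose the generators in reverse cyclic order, so that the single generator ($t_0$) whose target could collide with another state is applied last.

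For part~2 we may assume $q_0,q_1,q_2,q_3$ are pairwise distinct (otherwise $|T_6|<6$), so that $T_6$ describes a ``bidirectional star'' with centre $q_1$ and leaves $q_0,q_2,q_3$. The element I propose is $w = t_{21}\circ t_{10}\circ t_{31}\circ t_{12}\circ t_{01}\circ t_{13}\in\langle T\rangle$, with $t_{21}$ applied first. Tracing the two states $q_2$ and $q_3$ through the six steps gives $q_2\to q_1\to q_0\to q_0\to q_0\to q_1\to q_3$ and $q_3\to q_3\to q_3\to q_1\to q_2\to q_2\to q_2$, so $q_2 w=q_3$ and $q_3 w=q_2$: $w$ induces the transposition $(q_2,q_3)$ of the two-element set $\{q_2,q_3\}$ (equivalently $w^3=w$, and $\{w,w^2\}$ is a group of order two), and therefore $\langle T\rangle$ is not aperiodic. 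Verifying this is again a routine six-step computation once the word has been written down.

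The step I expect to be the real obstacle is \emph{finding} that word in part~2. Because every generator that would move a leaf toward another leaf also drags the centre $q_1$ with it, naive compositions merge states, and in fact no composition of the six generators of $T_6$ can cyclically permute three of the four states — this can be seen by following the (unordered) image set, which only ever coarsens. What one \emph{can} do is keep just two images apart: following the ordered pair (image of $q_2$, image of $q_3$) under partial compositions, it must stay a pair of distinct states, each generator either leaves it unchanged or rewrites one coordinate, and the forbidden moves are exactly those that would make the two coordinates coincide. A short search for a path from $(q_2,q_3)$ to $(q_3,q_2)$ in the resulting graph on the $12$ ordered pairs of distinct states yields, essentially uniquely up to the symmetry among the three leaves, the length-six ``detour through $q_0$'' encoded by the word above. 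In the write-up I would include this reachability remark only as motivation and present the direct six-step verification as the actual proof.
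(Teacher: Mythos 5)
Your proof is correct and takes essentially the same approach as the paper: in each part you exhibit an explicit word in the given unitary generators and verify that it induces a non-trivial permutation (a $(k-1)$-cycle, resp.\ a transposition) of a subset of $Q$. The only differences are cosmetic --- a cyclic reordering of the generators in part~1 so the cycle lands on $\{q_1,\dots,q_{k-1}\}$ rather than $\{q_0,\dots,q_{k-2}\}$, and in part~2 a different ordering of the same six generators that transposes the two leaves $q_2,q_3$ instead of the paper's transposition of $q_0,q_1$.
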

\begin{proof}
Without loss of generality, we can replace $q_i$ by $i$ in both claims.
\be
\item
Suppose that $T$ contains $t_0,\dots,t_{k-1}$, where $k\le n$, $t_q=(q,q+1)$ for
$q=0,\dots, k-2$, and $t_{k-1}=(k-1\to 0)$.
Then
$t_{k-2}t_{k-3}\dots t_{1}t_{0} t_{k-1}$ maps 
 $0 \text{ to } 1, 1 \text{ to } 2,\dots, k-3 \text{ to } k-2, k-2\text{ to }  0, \text{ and } k-1 \text{ to } 0,$
 and does not affect any other states.
 Thus the set $\{0,1,\dots, k-2\}$ is cyclically permuted, which shows that $\langle T \rangle$ is not aperiodic.
\item
If  $\{t_{01},t_{12},t_{13},t_{10},t_{21},t_{31}\}\subseteq T$, then the transformation 
$t_{12}t_{01}t_{13}t_{21}t_{10}t_{31}$ transposes 0 and 1; hence $\langle T \rangle$ is not aperiodic.
\qed
\ee
\end{proof}

\begin{theorem}
If $\cD=(Q,\Sig,\delta,0,F)$ is unitary, the following are equivalent:
\be
\item
$\cD$ is aperiodic.
\item
The set of generators of $\cD$ does not contain any $k$-cyclic subsets with $k\ge 3$, and does not contain any sets of type $T_6$.
\item
Every strongly connected component of $\cD$ is a bipath.
\ee
\end{theorem}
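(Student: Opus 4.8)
The plan is to establish the cycle of implications $(1)\Rightarrow(2)\Rightarrow(3)\Rightarrow(1)$. Throughout, I would identify the generators of $\cD$ with a directed graph $G$ on vertex set $Q$: after discarding self-loops, the unitary generator $(p\to q)$ contributes precisely the edge $p\to q$, so the edges of $G$ are exactly the unitary generators of $\cD$, and the strongly connected components (SCCs) of $\cD$ coincide with those of $G$. A one-vertex SCC is a trivial \bipath, so only SCCs with at least two vertices need attention. The implication $(1)\Rightarrow(2)$ is then immediate: it is the contrapositive of Lemma~\ref{lem:cycles}, whose two parts say precisely that the presence among the generators of a $k$-cyclic subset with $k\ge3$, or of a subset of type $T_6$, forces the generated semigroup to contain a non-trivial cycle.

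For $(2)\Rightarrow(3)$ I would fix an SCC $C$ with at least two vertices and argue in two stages. First, every edge $p\to q$ of $G$ with $p,q\in C$ is bidirectional: choosing a shortest directed path $q=u_0\to u_1\to\cdots\to u_\ell=p$ inside $C$, which is simple, the walk $u_0\to\cdots\to u_\ell\to u_0$ is a simple directed cycle on $\ell+1$ distinct vertices, hence a $k$-cyclic subset with $k=\ell+1$; condition (2) excludes $k\ge3$, while $\ell=0$ would force $p=q$, so $\ell=1$ and $q\to p\in G$. Thus the edges of $G$ inside $C$ form a symmetric relation --- an undirected graph $H$ on $C$ --- which is connected because $C$ is strongly connected. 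A cycle of length $\ge3$ in $H$ is again a forbidden $k$-cyclic subset, so $H$ is a tree; and a vertex of $H$ with three neighbours would yield, by taking both directions of each of its three incident edges, a subset of type $T_6$, contradicting (2), so $H$ has maximum degree at most $2$. A tree of maximum degree at most $2$ is a path, and listing its vertices in order along the path exhibits the edges of $G$ inside $C$ as exactly those of a \bipath; hence $C$ is a \bipath.

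For $(3)\Rightarrow(1)$ I would order the SCCs $C_1,\dots,C_s$ by a topological order of the condensation of $G$ (so every edge of $G$ runs from some $C_i$ to some $C_j$ with $i\le j$) and order $Q$ by SCC index first and then, within each $C_i$, by position along its \bipath. Two facts then hold for every generator $(p\to q)$ by construction: the SCC index of $q$ is at least that of $p$, and if $p$ and $q$ lie in the same SCC they are consecutive on its \bipath. Now suppose for contradiction that a word $w$ induces a transformation $t$ with a $k$-cycle on $P=\{p_0,\dots,p_{k-1}\}$, where $k\ge2$ and $p_jt=p_{(j+1)\bmod k}$. Applying a generator never decreases a state's SCC index (a state other than the one moved is fixed), so the index is non-decreasing along any computation; running once around the orbit $p_0\to p_1\to\cdots\to p_{k-1}\to p_0$ therefore forces all $p_j$ into one SCC $C$. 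Since the condensation is acyclic, an index that has increased cannot decrease again, so the $w$-computation starting at any $p_j$ --- which begins and ends in $C$ --- stays in $C$ throughout. Let $u$ be the subword of $w$ that keeps only those letters whose generator has both states in $C$; any discarded letter has generator $(p\to q)$ with $p\notin C$ or $q\notin C$, and in either case it fixes every state of a trajectory confined to $C$, so $p_jt=p_jt_u$ for all $j$ and $t_u$ still $k$-cycles $P$. But $t_u$ is a composition of the \bipath\ transformations of $C$ and fixes $Q\setminus C$; these transformations are monotonic on the chain $C$, so by Proposition~\ref{prop:GoHo} the restriction of $t_u$ to $C$ lies in an aperiodic semigroup and hence has no cycle of length $\ge2$, contradicting that $t_u$ has a $k$-cycle on $P\subseteq C$. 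Hence $\cD$ is aperiodic.

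The step I expect to be the main obstacle is the bookkeeping in $(3)\Rightarrow(1)$: one must argue cleanly that a computation which both starts and ends in $C$ never leaves $C$ in between --- this is exactly where acyclicity of the condensation enters --- and that deleting the non-$C$-internal letters of $w$ changes no image $p_jt$, the point being that such a letter's generator can never actually fire along a trajectory confined to $C$. The only other slightly delicate point is obtaining bidirectionality of the inside-$C$ edges in $(2)\Rightarrow(3)$ through the shortest-path argument; everything else is routine graph theory combined with the already-established Lemma~\ref{lem:cycles} and Proposition~\ref{prop:GoHo}.
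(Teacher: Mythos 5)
Your proof is correct, and its overall architecture ($1\Rightarrow2\Rightarrow3\Rightarrow1$) matches the paper's. The first two implications are essentially the paper's: $(1)\Rightarrow(2)$ is the contrapositive of Lemma~\ref{lem:cycles} in both treatments, and your $(2)\Rightarrow(3)$ — bidirectionality of intra-component edges via a shortest return path, then showing the resulting undirected graph $H$ on $C$ is connected, acyclic (no $k$-cyclic subset with $k\ge3$) and of maximum degree two (no $T_6$), hence a path — is a slightly more systematic rendering of the paper's local argument (the paper looks at the last edge $(r\to p)$ of a return path and at a vertex with three out-edges). The genuine divergence is in $(3)\Rightarrow(1)$. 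The paper argues via Sch\"utzenberger's theorem: each \bipath\ is monotonic hence aperiodic, the languages of words carrying one state to another within a \bipath\ are star-free, and a loop-free connection of \bipaths\ preserves star-freeness of these transition languages; this is short but leaves the composition step informal. You instead give a direct combinatorial argument: generators never decrease the SCC index in a topological order of the condensation, so any $k$-cycle ($k\ge2$) is confined to a single component $C$ and its trajectories never leave $C$; deleting the letters whose generators are not internal to $C$ does not change these trajectories, and what remains is a monotonic transformation of the \bipath\ $C$, which by Proposition~\ref{prop:GoHo} has no cycle. Your route is more elementary and self-contained (no appeal to Sch\"utzenberger or to closure properties of star-free languages), at the cost of some bookkeeping that you handle correctly; the paper's route is shorter but relies on machinery external to the combinatorics of the DFA.
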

\begin{proof}
 $1\Rightarrow 2\colon$ This follows from Lemma~\ref{lem:cycles}. 

$2\Rightarrow 3\colon$ Consider a strongly connected component $C$. If $|C|=1$, the claim holds.
Otherwise, suppose $p\in C$ and $(p\to q)$ is a transition. Then there must also be a directed path from $q$ to $p$. If the last transition in that path is $(r\to p)$, where $r \neq q$,
then the set of  generators must contain a  $k$-cyclic subset with $k\ge 3$, which is a contradiction.
Hence the transition $(q\to p)$ must be present. 

Next, suppose that there are transitions $(p\to q)$, $(p\to r)$, and $(p\to s)$.
By the argument above there must also be transitions $(q\to p)$, $(r\to p)$, and $(r\to s)$. 
But then the set of generators contains a subset of type $T_6$, which is again a contradiction.

It follows that every strongly connected component is a bipath, and the graph of the transitions of $\cD$ is a loop-free connection of such bipaths.

$3\Rightarrow 1\colon$
Since a bipath is monotonic, it is aperiodic by Proposition~\ref{prop:GoHo}.
By  Sch\"utzenberger's  theorem~\cite{Sch65},
the language of all words taking any state of the bipath to any other state of that bipath is star-free.
Since the graph of $\cD$ is a loop-free connection of bipaths, the language of all words taking any state of $\cD$ to any other state of $\cD$ is star-free.
Hence $\cD$ is aperiodic.
\qed
\end{proof}

A unitary DFA is \emph{complete} if the addition of any unitary transition results in a DFA that is not aperiodic.

\begin{theorem}A maximal aperiodic unitary semigroup is isomorphic to the transition semigroup of a complete unitary DFA $\mathcal{D}_u(n_1,\ldots,n_m)$,  where  $(n_1,\ldots,n_m)$  is some distribution of $Q$.
\end{theorem}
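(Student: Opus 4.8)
My plan is to realise a maximal aperiodic unitary semigroup $S$ as the transition semigroup of a canonical unitary DFA, read off its coarse structure from the theorem above characterising aperiodic unitary DFAs, and then use maximality to pin that structure down to a $\cD_{u}(n_1,\dots,n_m)$. First I would let $U$ be the set of all unitary transformations belonging to $S$; since $S$ is unitary it is generated by some set of unitary transformations, all of which lie in $U$, so $S=\langle U\rangle$. Let $\cD$ be a DFA on $Q$ whose input letters induce exactly the transformations in $U$ (the choice of initial and final states is irrelevant). Then $\cD$ is unitary, aperiodic since $S$ is, and its transition semigroup is $\langle U\rangle=S$. By the preceding theorem, every strongly connected component of the transition graph of $\cD$ is a bipath; call these components $C_1,\dots,C_m$ with $|C_i|=n_i$. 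Because the condensation of any digraph is acyclic, we may index the components so that every transition of $\cD$ between distinct components runs from a lower to a higher index, and relabelling $Q$ we may assume the $C_i$ are the blocks $Q_1,\dots,Q_m$ of the distribution $(n_1,\dots,n_m)$, each listed in bipath order. Then the transitions internal to $C_i$ are precisely the Type~1 transformations of block $i$, and every transition between distinct blocks has the shape of a Type~2 transformation: an edge from a vertex of a block $C_i$ to a vertex of a block $C_j$ with $i<j$.

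Next I would invoke maximality, which forces $\cD$ to be \emph{complete}: if some unitary transformation $u$ not already induced by $\cD$ could be added as a generator without destroying aperiodicity, then, since $u$ is unitary and $u\notin U$ gives $u\notin S$, the semigroup $\langle U\cup\{u\}\rangle$ would be a strictly larger aperiodic unitary semigroup. The only fact I need about additions is that, by the equivalence ``aperiodic $\Leftrightarrow$ every strongly connected component is a bipath'', inserting an edge $(a\to b)$ preserves aperiodicity whenever it creates no new directed cycle, that is, whenever $\cD$ has no directed path from $b$ to $a$; then the strongly connected components, and hence the property of each being a bipath, are unchanged. I would apply this twice. (i) If two components $C_i$, $C_j$ were incomparable in the condensation, then for $a\in C_i$, $b\in C_j$ there is no edge $(a\to b)$ and no directed path from $b$ to $a$, so $(a\to b)$ could be added --- contradicting completeness; hence the components form a linear order $C_1,\dots,C_m$. (ii) If for some $i<j$ a forward edge $(a\to b)$ with $a\in C_i$, $b\in C_j$ were absent, then since from $b$ one can reach only $C_j,C_{j+1},\dots$ there is again no path from $b$ to $a$, so $(a\to b)$ could be added --- again a contradiction. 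Hence all forward edges are present, so the between-block transitions are exactly the Type~2 transformations.

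Finally I would rule out two consecutive singleton blocks and conclude. Suppose $C_i=\{a\}$ and $C_{i+1}=\{b\}$ with $n_i=n_{i+1}=1$. Since all forward edges are present, the out-neighbours of $a$ are $b$ together with vertices lying in blocks $C_{i+2},\dots,C_m$, and no vertex of such a block can reach $b$; so the only directed path from $a$ to $b$ is the single edge $(a\to b)$. Consequently adding $(b\to a)$ creates exactly one new cycle, namely $a\to b\to a$, so its only effect on the strongly connected components is to replace $\{a\}$ and $\{b\}$ by the two-vertex bipath $\{a,b\}$, leaving every component a bipath and $\cD$ aperiodic; since $(b\to a)\notin S$, this contradicts completeness. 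Therefore $(n_1,\dots,n_m)$ contains no two consecutive $1$'s, so $\cD_{u}(n_1,\dots,n_m)$ is defined; its generators are exactly the Type~1 and Type~2 transformations, and these are precisely the transitions of $\cD$. Hence $S=\langle U\rangle$ is the transition semigroup of $\cD_{u}(n_1,\dots,n_m)$.

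I expect the one genuinely delicate point to be this ``addability'' observation: confirming that a single cross-component edge insertion with no returning path leaves all strongly connected components bipaths, and in particular that the insertion $(b\to a)$ between two consecutive singleton blocks neither extends an existing bipath improperly nor pulls further vertices into the merged component. Everything else is reachability bookkeeping inside a condensation that maximality has forced to be a linear order.
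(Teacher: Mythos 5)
Your proof is correct and follows essentially the same route as the paper's: realise the maximal semigroup as the transition semigroup of an aperiodic unitary DFA, use the preceding theorem to decompose it into a topologically ordered collection of bipaths, and invoke maximality to force the order to be linear and all Type~2 transitions to be present. You are in fact more careful than the paper on two points it leaves implicit --- that inserting a cross-component edge with no returning path preserves aperiodicity (so the enlarged semigroup really is aperiodic unitary), and that two consecutive singleton blocks would permit adding a backward edge (so the resulting distribution satisfies the condition under which $\cD_{u}(n_1,\dots,n_m)$ is defined).
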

\begin{proof}
We know that an aperiodic unitary DFA $\cD$ is a loop-free connection of bipaths. Let $Q_1,\ldots,Q_m$ be the bipaths of $\cD$. There exists a linear ordering $<$ of them, such that there is no transformation $(p \to q)$ for $q \in Q_i,p \in Q_j,i<j$.  If  all possible transformations $(q \to p)$ for $q \in Q_i,p \in Q_j,i<j$ are present,  then $\cD$ is isomorphic to $\mathcal{D}_u(n_1,\ldots,n_m)$. Otherwise we can add more unitary  transformations  of Type 2 and obtain a larger semigroup.
\qed
\end{proof}

For each distribution $(n_1,\dots, n_m)$, we calculate the size of the transition semigroup of 
$\mathcal{D}_{ui}(n_1,\ldots,n_m)$.

\begin{theorem}\label{thm:unitary_size}
The cardinality of the transition semigroup of $\mathcal{D}_{ui}(n_1,\ldots,n_m)$ is
\begin{equation}
\prod_{i=1}^m \left(\binom{2n_i-1}{n_i} + \sum_{h=0}^{n_i-1} \left(\sum_{j=i+1}^{m} n_j\right)^{n_i-h} \binom{n_i}{h} \binom{n_i+h-1}{h} \right).
\end{equation}
\end{theorem}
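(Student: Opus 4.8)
The plan is to give an explicit description of the transition semigroup $S$ of $\mathcal{D}_{ui}(n_1,\ldots,n_m)$ and then count it directly. For a transformation $t$ of $Q$ and a block $Q_i$, write $A_i(t)=\{q\in Q_i\mid qt\in Q_i\}$ for the set of states of $Q_i$ that $t$ keeps inside $Q_i$. I claim that $S$ consists of exactly those $t$ such that, for every $i$: (i) $Q_it\subseteq Q_i\cup Q_{i+1}\cup\cdots\cup Q_m$, and (ii) the restriction of $t$ to $A_i(t)$ is monotonic (its image then automatically lies in $Q_i$). When $A_i(t)=Q_i$ this says $t|_{Q_i}$ is a monotonic self-map of $Q_i$; otherwise the $n_i-|A_i(t)|$ states of $Q_i\setminus A_i(t)$ may be sent arbitrarily into $Q_{i+1}\cup\cdots\cup Q_m$.

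The first, and most delicate, step is to show that every element of $S$ has this form. Write $t=g_1g_2\cdots g_\ell$ as a product of generators. Every generator sends $Q_i$ into $Q_i\cup\cdots\cup Q_m$ (Type~1 stays inside the block, Type~2 moves forward, $\one$ is trivial), and this is preserved under composition, giving (i); in particular the block index of the trajectory $q,qg_1,qg_1g_2,\ldots$ of any $q\in Q$ never decreases. Hence if $q\in A_i(t)$ its whole trajectory stays in $Q_i$, so the only generators among $g_1,\ldots,g_\ell$ that can move $q$ are Type~1 transformations of $Q_i$: a Type~2 transformation out of $Q_i$ would eject $q$, and every other generator fixes all states of $Q_i$ pointwise. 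Therefore, for $a,b\in A_i(t)$, the values $at$ and $bt$ are obtained by applying to $a$ and $b$ the same composition of Type~1 transformations of $Q_i$; since each of these is monotonic on $Q_i$ and monotonicity is preserved under composition, $t|_{A_i(t)}$ is monotonic, which is (ii).

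Conversely, I would show that every $t$ satisfying (i)--(ii) lies in $S$ by building a word for $t$ block by block, processing $Q_m,Q_{m-1},\ldots,Q_1$ in this order. To process $Q_i$, first apply, for each $q\in Q_i\setminus A_i(t)$, the Type~2 transformation $(q\to qt)$, which moves $q$ to its final target and fixes every other state; then extend the monotonic partial map $a\mapsto at$ on $A_i(t)$ to a monotonic self-map of the chain $Q_i$ (always possible) and realize that self-map using Type~1 transformations of $Q_i$ via Proposition~\ref{prop:GoHo}. The bookkeeping point is that Type~1 transformations inside $Q_i$, and Type~2 transformations out of $Q_i$, never relocate a state lying in some $Q_j$ with $j\neq i$; hence, processing right to left, when we reach $Q_i$ its current contents are exactly the states originally in $Q_i$, while every later block already holds its final image and is left undisturbed. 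The resulting word induces $t$, so $t\in S$.

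It remains to count. By the characterization, an element of $S$ is determined independently on each block $Q_i$ by a choice of which states of $Q_i$ stay, their monotonic image inside $Q_i$, and, for the remaining states, arbitrary images in $Q_{i+1}\cup\cdots\cup Q_m$; hence $|S|=\prod_{i=1}^m c_i$, where $c_i$ counts these choices for $Q_i$. If all $n_i$ states stay, there are $\binom{2n_i-1}{n_i}$ monotonic self-maps of an $n_i$-element chain. If exactly $h\le n_i-1$ stay, we pick them ($\binom{n_i}{h}$ ways), pick their weakly increasing image in $Q_i$ ($\binom{n_i+h-1}{h}$ ways), and pick arbitrary images for the remaining $n_i-h$ states among the $\sum_{j=i+1}^m n_j$ states of later blocks ($\left(\sum_{j=i+1}^m n_j\right)^{n_i-h}$ ways). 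Summing over $h$ gives the stated factor, and the product over $i$ is exactly the claimed formula; the only genuinely substantial point is the trajectory argument of the second step.
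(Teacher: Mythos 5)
Your proof is correct and follows essentially the same route as the paper's: a block-by-block decomposition in which each block $Q_i$ contributes a (partially) monotonic self-map, counted by $\binom{2n_i-1}{n_i}$ or $\binom{n_i}{h}\binom{n_i+h-1}{h}$, together with arbitrary images in $Q_{i+1}\cup\cdots\cup Q_m$ for the ejected states, the factors being multiplied over blocks. The only difference is one of rigor: you prove both directions of the characterization explicitly (the trajectory argument for necessity and the right-to-left word construction for realizability), whereas the paper delegates the per-block count to the Gomes--Howie citation and leaves the fact that every combination is actually generated implicit.
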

\begin{proof}
As above $\cD_{ui}(n_1,\ldots,n_m)$ is a loop-free connection of bipaths, and its generators are the transformations within each bipath, all transformations of the form $(p\to q)$ where $p\in Q_i$, $q\in Q_j$, $i<j$, and $\one$.

In the transition semigroup of $\mathcal{D}_{ui}(n_1,\ldots,n_m)$, consider the transformation $t_i$ that (a) does not affect any states in $Q_j$ for $j<i$,  (b) maps some number $h$ of states of $Q_i$ to $Q_i$, and (c) maps the remaining $n_i-h$ states of $Q_i$ to some states in $Q_{i+1}\cup\dots\cup Q_m$. 
It is convenient to temporarily consider a partial transformation $t_i'$ which for all $q\in Q_i$ has the property $qt_i'=qt_i$, if $qt_i\in Q_i$ and $qt_i'=\Box$, otherwise. 
In other words, the images of the $n_i-h$ states mapped to the outside of $Q_i$ are all lumped together into the undefined value $\Box$. The number of such partial transformations generated by the transitions in the bipath is 
$\binom{n_i}{h} \binom{n_i+h-1}{h}$~\cite{GoHo92}; these are all the partially monotonic transformations of $Q_i$ that map exactly $h$ states of $Q_i$ to $Q_i$.

Returning now to $t_i$, consider first the case
$h=n_i$; then $t_i'$ is a total transformation equal to $t_i$, and there are $\binom{2n_i-1}{n_i}$ such transformations.
Otherwise, $t_i$ maps $n_i-h$ states of $Q_i$ to arbitrary states in $Q_{i+1}\cup\dots\cup Q_m$. If $k=n_{i+1}+\dots+n_m$ is the number of states in the bipaths below $Q_i$,
then for each $t_i'$ there are $k^{n_i-h}$  transformations $t_i$. 
Altogether,  for a fixed bipath $Q_i$, the number of transformations $t_i$ is
 \begin{equation}
 \binom{2n_i-1}{n_i} + \sum_{h=0}^{n_i-1} k^{n_i-h} \binom{n_i}{h} \binom{n_i+h-1}{h}.
 \end{equation} 

If $t$ is any transformation of $\mathcal{D}_{ui}(n_1,\ldots,n_m)$, then it can be represented by $t= t_m \circ t_{m-1} \circ \dots \circ t_1$, where $t_i$ maps  $Q_i$ into $Q_i \cup \ldots \cup Q_m$. Since the domains of $t_1,\dots,t_m$ are disjoint, there is a bijection between transformations $t$ and the sets $\{t_1,\dots,t_m\}$. Hence we can multiply the numbers of different transformations $t_i$ for each $1 \le i\le m$, and the formula in the theorem results. 
\qed
\end{proof}

Note that each factor of the product in Theorem~\ref{thm:unitary_size} depends only on $n_i$ and on the sum $k=n_{i+1}+\dots+n_m$. Hence if $\mathcal{D}_{ui}(n_1,\ldots,n_m)$ is maximal, then $\mathcal{D}_{ui}(n_2,\ldots,n_m)$ is also maximal and so on. 
Consequently, we have 

\begin{corollary}
Let $m_{ui}(n)$ be the cardinality of the largest transition semigroup of  DFA $\mathcal{D}_{ui}(n_1,\ldots,n_m)$ with $n$ states. If we define $m_{ui}(0) = 1$, then for $n>0$ 
{\small
\begin{equation}
m_{ui}(n) = \max_{j=1,\ldots,n} \left( m_{ui}(n-j) \left(\binom{2j-1}{j} + \sum_{h=0}^{j-1} (n-j)^{j-h} \binom{j}{h} \binom{j+h-1}{h} \right)\right).
\end{equation}
}
\end{corollary}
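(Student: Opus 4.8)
The plan is to read the recursion directly off the closed form in Theorem~\ref{thm:unitary_size}, the point being that the product there factors over the blocks in a way that is compatible with peeling off the topmost bipath (an optimal‑substructure argument). Abbreviate
\[
f(a,b)=\binom{2a-1}{a}+\sum_{h=0}^{a-1}b^{a-h}\binom{a}{h}\binom{a+h-1}{h},
\]
the factor attached in Theorem~\ref{thm:unitary_size} to a bipath of size $a$ sitting above $b$ further states, so that the transition semigroup of $\mathcal{D}_{ui}(n_1,\dots,n_m)$ has cardinality $\prod_{i=1}^{m} f(n_i,k_i)$ with $k_i=n_{i+1}+\dots+n_m$.

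First I would isolate the top block. Writing $j=n_1$, we have $k_1=n-j$, so the first factor is $f(j,n-j)$; and for $i\ge 2$ the numbers $n_i$ and $k_i=n_{i+1}+\dots+n_m$ do not involve $n_1$ at all, so $\prod_{i=2}^{m} f(n_i,k_i)$ is exactly the cardinality of the transition semigroup of $\mathcal{D}_{ui}(n_2,\dots,n_m)$, a DFA on $n-j$ states. (When $j=n$, i.e.\ $m=1$, the tail product is empty and equals $1$, which is the reason for the convention $m_{ui}(0)=1$.) Every distribution of $Q$ arises by choosing $j\in\{1,\dots,n\}$ and then a distribution of the remaining $n-j$ states, and conversely $(n_2,\dots,n_m)$ is again an admissible distribution of $n-j$ (deleting the first part cannot create two consecutive singleton parts). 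Splitting the maximisation accordingly gives
\[
m_{ui}(n)=\max_{j=1,\dots,n}\Bigl(f(j,n-j)\cdot m_{ui}(n-j)\Bigr),
\]
which is the claimed identity once the definition of $f$ is substituted.

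There is really just one thing to verify carefully, and it is already implicit in the proof of Theorem~\ref{thm:unitary_size}: that $\prod_{i=2}^{m} f(n_i,k_i)$ is genuinely the size of the transition semigroup of $\mathcal{D}_{ui}(n_2,\dots,n_m)$, not merely a coincidentally identical expression. This holds because the decomposition $t=t_m\circ\dots\circ t_1$ used there restricts, on the sub‑DFA supported on $Q_2\cup\dots\cup Q_m$, to the analogous decomposition $t'=t_m\circ\dots\circ t_2$, and the counts $f(n_i,k_i)$ for $i\ge 2$ are computed intrinsically from $n_i$ and $k_i$ with no reference to $Q_1$; the bijection between transformations and tuples $\{t_2,\dots,t_m\}$ is the same in both automata. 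A minor side remark secures that the outer maximum is attained at an admissible distribution (no two consecutive singleton parts): replacing two consecutive singleton parts by one part of size $2$ never decreases the transition‑semigroup size (concretely, $f(2,b)\ge f(1,b+1)f(1,b)$ for all $b\ge 0$), so optimal distributions may be taken admissible and prepending $j$ can be arranged to keep admissibility. I do not expect any real obstacle; the whole content is that the objective function has optimal substructure with respect to removing the topmost bipath.
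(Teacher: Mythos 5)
Your proposal is correct and follows essentially the same route as the paper, which derives the recursion by observing that each factor in the product of Theorem~\ref{thm:unitary_size} depends only on $n_i$ and the tail sum $n_{i+1}+\dots+n_m$, so that peeling off the top bipath exhibits the optimal substructure. You are in fact somewhat more careful than the paper (which offers no further argument), notably in checking that the tail product is genuinely the semigroup size of $\mathcal{D}_{ui}(n_2,\dots,n_m)$ and that the maximum is attained at an admissible distribution via $f(2,b)\ge f(1,b+1)f(1,b)$.
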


This leads directly to a dynamic algorithm taking $O(n^3)$ time for computing $m_{ui}(n)$ and the distributions  $(n_1,\ldots,n_m)$  yielding the maximal unitary semigroups.    
 This holds assuming constant time for computing the internal  terms  in the summation and summing them, where, however, the numbers can be very large ($O(n^n)$). The precise complexity depends on the algorithms used for multiplication, exponentiation and calculation of binomial coefficients. 

We were able to compute the maximal $\mathcal{D}_{ui}$ up to $n=1,000$. Here is an example of the maximal one for $n=100$:
$$\cD_{ui}(12,11,10,10,9,8,8,7,6,5,5,4,3,2);$$
its syntactic semigroup size exceeds $2.1 \times 10^{160}$. 
 Compare this to the previously known largest semigroup  of an $\mathcal{R}$-trivial language; its size is
$100!$ which is approximately $9.3 \times 10^{157}$.
On the other hand, the maximal possible syntactic semigroup of any regular language for $n=100$ is $10^{200}$.

\subsection{Asymptotic Lower Bound}

We were not able to compute  the  tight asymptotic bound on the maximal size of unitary semigroups. However, we computed  a  lower bound which is larger than $n!$, the previously known lower bound for the size of aperiodic semigroups.

\begin{theorem}
For $n$ even the size of the maximal unitary semigroup is at least 
$$ \frac{n!(n+1)!}{2^n ((n/2)!)^2}.$$
\end{theorem}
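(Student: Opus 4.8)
The plan is to exhibit a single well-chosen distribution $(n_1,\ldots,n_m)$ whose transition-semigroup size, computed by Theorem~\ref{thm:unitary_size}, already beats the claimed bound, and then to estimate that size from below. Since the maximal unitary semigroup is at least as large as the semigroup of any particular $\cD_{ui}(n_1,\ldots,n_m)$, it suffices to make one good choice. The natural candidate, motivated by the $\mathcal{R}$-trivial case $\cD_{ui}(1,1,\ldots,1)$ and by the shape of the empirically optimal distributions, is the fully split distribution $(1,1,\ldots,1)$ with $m=n$ parts. For this distribution the Type~1 contribution $\binom{2n_i-1}{n_i}=\binom{1}{1}=1$ vanishes as a factor, the inner sum over $h$ collapses to the single term $h=0$, and the $i$-th factor becomes simply $\left(\sum_{j=i+1}^n n_j\right)^{1}\binom{1}{0}\binom{0}{0}=(n-i)$, recovering $n!$; so the fully split distribution alone gives exactly the old bound, not more. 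Hence I would instead take the all-twos distribution $(2,2,\ldots,2)$ with $m=n/2$ parts (this is where "$n$ even" is used).

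For the distribution $(2,2,\ldots,2)$, fix block $i$ so that $n_i=2$ and $k=\sum_{j>i}n_j = n-2i$. The $i$-th factor of the product in Theorem~\ref{thm:unitary_size} is
\begin{equation*}
\binom{3}{2} + \sum_{h=0}^{1} k^{2-h}\binom{2}{h}\binom{h+1}{h}
= 3 + k^2\binom{2}{0}\binom{1}{0} + k^1\binom{2}{1}\binom{2}{1}
= k^2 + 4k + 3 = (k+1)(k+3).
\end{equation*}
Letting $k$ run over the values $n-2, n-4, \ldots, 2, 0$ (i.e. $k=2\ell$ for $\ell = n/2-1, \ldots, 1, 0$), the total size is
\begin{equation*}
\prod_{\ell=0}^{n/2-1} (2\ell+1)(2\ell+3).
\end{equation*}
The next step is to rewrite this product in closed form: $\prod_{\ell=0}^{n/2-1}(2\ell+1)$ is the product of the first $n/2$ odd numbers, which equals $\frac{n!}{2^{n/2}(n/2)!}$, and $\prod_{\ell=0}^{n/2-1}(2\ell+3)$ is the product of the odd numbers from $3$ to $n+1$, which equals $\frac{(n+1)!}{2^{n/2}(n/2)!}$ (the product of the first $n/2+1$ odd numbers divided by the first odd number $1$). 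Multiplying these two closed forms gives exactly $\dfrac{n!\,(n+1)!}{2^{n}((n/2)!)^2}$, which is the claimed lower bound, so in fact the $(2,2,\ldots,2)$ distribution realizes the bound with equality.

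The only real work is the two elementary product identities $\prod_{\ell=0}^{k-1}(2\ell+1) = \frac{(2k)!}{2^k k!}$ and its shifted analogue; these are standard double-factorial manipulations and present no genuine obstacle. The one point to be careful about is the indexing: one must check that as the block index $i$ ranges over $1,\ldots,n/2$ the quantity $k=n-2i$ indeed takes each value in $\{n-2, n-4, \ldots, 0\}$ exactly once, so that the product over blocks matches the product over $\ell$ claimed above — a routine reindexing. It also remains to note that $\cD_{ui}(2,2,\ldots,2)$ is a legitimate member of the family (no two consecutive singleton blocks, trivially, since every block has size $2$), so that Theorem~\ref{thm:unitary_size} applies and the maximal unitary semigroup is at least this large. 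I would close by remarking that this shows the bound is attained by $(2,2,\ldots,2)$, though it need not be the maximal distribution overall.
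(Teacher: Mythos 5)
Your proposal is correct and follows essentially the same route as the paper: instantiate Theorem~\ref{thm:unitary_size} at the distribution $(2,2,\dots,2)$, compute each factor as $(k+1)(k+3)$, and evaluate the resulting product of consecutive odd numbers via the double-factorial identity $(2k-1)!!=\frac{(2k)!}{2^k k!}$. The only cosmetic difference is that you split the product into two odd-number products before converting, whereas the paper converts the combined double factorials directly; both yield $\frac{n!\,(n+1)!}{2^n((n/2)!)^2}$ exactly.
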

\begin{proof}
Let $n$ be even and  consider $\cD_{ui}(2,2,\dots,2)$ consisting of $m=n/2$ bipaths. From Theorem~\ref{thm:unitary_size} we have:
\begin{eqnarray*}
&&\prod_{i=1}^{m} \left(\binom{4-1}{2} + \sum_{h=0}^{1} \left(\sum_{j=i+1}^{m} 2\right)^{2-h} \binom{2}{h} \binom{2+h-1}{h} \right)\\
&=& \prod_{i=1}^{m} 4(m-i)^2 + 8(m-i) + 3 \\
&=& \prod_{i=1}^{m} (2i-1)(2i+1) \\
&=& (2m-1)!! (2m+1)!! \\
&=& (2m-1)!! (2(m+1)-1)!!
\end{eqnarray*}

By using the equality $(2k-1)!! = \frac{(2k)!}{2^k k!}$ we obtain:
\begin{eqnarray*}
&=& \frac{(2m)!}{2^m m!} \frac{(2(m+1))!}{2^{m+1} (m+1)!} \\
&=& \frac{(2m)!(2m+2)!}{2^{2m+1} m! (m+1)!} \\
&=& \frac{n!(n+2)!}{2^{n+1} (n/2)! ((n/2)+1)!} \\
&=& \frac{n!(n+2)(n+1)!}{2^{n+1} (n/2)! (n/2+1)(n/2)!} \\
&=& \frac{n!(n+1)!}{2^n ((n/2)!)^2}.
\end{eqnarray*}
\qed
\end{proof}

For $n=100$ the bound exceeds $7.5 \times 10^{158}$.  Larger lower bounds can also be found using increasing values of $j$ in $\cD_{ui}(j,j,\dots,j)$, but the complexity of the calculations increases, and such bounds are not tight. 

\goodbreak
\section{Semiconstant Semigroups}
\label{sec:semi2}
 We now consider our largest aperiodic semigroups, the semiconstant ones.

\subsection{Nearly Monotonic Semigroups}

Let $K_Q$ be the set of all constant transformations of $Q$, and $NM_Q = PM_Q \cup K_Q$. 
We call the transformations in $NM_Q$ \emph{nearly monotonic} with respect to the usual order on integers.
The next result follows from Proposition~\ref{prop:pmon} and~\cite{BLL13}.
 \begin{proposition} 
 \label{prop:nmon}
Let $n\ge 2$ and $\cD_{scti}((n-1,1))=(Q, \Sig_{scti},\delta_{scti},0,\{n-1\})$.  Then
 \be
 \item 
 Each of the $3n-4$ transformations of $\cD_{scti}((n-1,1))$ of Types 1, 2, and 4 is partially monotonic, and there is one constant transformation $(Q\to 0)$.
 Thus DFA $\cD_{scti}((n-1,1))$ is nearly monotonic, and hence aperiodic.
 \item
 The transition semigroup $NM_Q$ of $\cD_{scti}((n-1,1))$ consists of all the $h(n)$ nearly monotonic transformations of $Q$, where
 \begin{equation}
h(n)=e(n)+n-1.
\end{equation}
 \item
 Each generator, other than the constant and $\one$, is unitary, and $3n-3$ is the smallest number of unitary, constant and identity generators of $NM_Q$.
 \ee
 \end{proposition}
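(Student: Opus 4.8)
The plan is to establish the three parts of Proposition~\ref{prop:nmon} by reducing everything to already-proved facts about $\cD_{ui}(n-1,1)$ from Proposition~\ref{prop:pmon}, together with a direct analysis of what the single Type~3 (constant) generator contributes. The DFA $\cD_{scti}((n-1,1))$ has the distribution $(n-1,1)$, so $Q_1=\{0,\dots,n-2\}$ and $Q_2=\{n-1\}$; its generators of Types 1, 2, and 4 are exactly the generators of $\cD_{ui}(n-1,1)$, and there is exactly one internal node $w$ in the two-leaf full binary tree with $\mathrm{Q}(w)=Q$, giving the single Type~3 transformation $(Q\to\min(Q))=(Q\to 0)$, which is the constant transformation sending every state to $0$. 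Part~(1) is then immediate: the $3n-4$ non-Type-3 generators are partially monotonic by Proposition~\ref{prop:pmon}(1), a constant transformation is aperiodic (it has no cycle of length $\ge 2$), and by Sch\"utzenberger's theorem — or directly, since $NM_Q$ contains no transformation with a cycle — the whole DFA is aperiodic.

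For part~(2) the plan is a two-way inclusion. First, $\langle \Sig_{scti}\rangle \supseteq PM_Q$ because the Type 1, 2, 4 generators already generate $PM_Q$ by Proposition~\ref{prop:pmon}(2), and $\langle\Sig_{scti}\rangle$ obviously contains the constant $(Q\to 0)$; composing $(Q\to 0)$ on the left or right with partially monotonic transformations yields every constant transformation $(Q\to q)$ (e.g.\ post-compose $(Q\to 0)$ with any partially monotonic $t$ having $0t=q$, which exists for every $q$), so $\langle\Sig_{scti}\rangle\supseteq PM_Q\cup K_Q = NM_Q$. For the reverse inclusion $\langle\Sig_{scti}\rangle\subseteq NM_Q$, I would show $NM_Q$ is closed under composition: a product of two transformations from $NM_Q$ is in $NM_Q$. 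This is the one genuine computation. The cases where both factors lie in $PM_Q$ (closed, by Proposition~\ref{prop:pmon}(2)) or where the second factor is constant (the product is then constant) are trivial; the case to check is $t\cdot s$ with $t\in K_Q$ constant, say $t=(Q\to q)$, and $s\in PM_Q$ — but then $t s$ is the constant transformation $(Q\to qs)$, again in $K_Q$. So $NM_Q$ is a subsemigroup containing the generators, giving equality. The count $h(n)=e(n)+n-1$ follows since $|PM_Q|=e(n)$, $|K_Q|=n$, and $PM_Q\cap K_Q=\{(Q\to n-1)\}$ is the single constant transformation that is also partially monotonic (the everywhere-undefined partial map becomes $(Q\to n-1)$), so $|NM_Q|=e(n)+n-1$.

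For part~(3): every generator other than $(Q\to 0)$ and $\one$ is unitary and idempotent by Proposition~\ref{prop:pmon}(3), which already tells us $\cD_{ui}(n-1,1)$ needs at least $3n-5$ unitary generators for $PM_Q\setminus\{\one\}$ and $3n-4$ idempotent generators for $PM_Q$; adding the one constant $(Q\to 0)$ gives $3n-3$ such generators for $NM_Q$. The minimality claim I would argue as in~\cite{BLL13}: a constant transformation cannot be written as a product of unitary transformations and $\one$ restricted to a set not containing a full "collapse," so at least one non-unitary (constant) generator is needed, and the lower bound $3n-5$ for unitary generators of $PM_Q\setminus\{\one\}$ is inherited because any generating set of $NM_Q$ minus its constant and identity elements must generate at least $PM_Q\setminus\{\one\}$ (quotienting out by the ideal of constants); together with $\one$ (needed since no non-trivial composite equals $\one$) and one constant, $3n-3$ is forced.

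The main obstacle is the closure argument in part~(2): verifying carefully that the product of a constant transformation with a partially monotonic one stays within $NM_Q$ in all sub-cases (in particular confirming the treatment of the $n-1$ "undefined" coordinate is consistent, since partially monotonic transformations here are total transformations fixing or mapping into the sink $n-1$). Everything else is either quoted from Proposition~\ref{prop:pmon} or a one-line observation about constant transformations; I expect no surprises there.
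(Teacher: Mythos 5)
Your proposal is correct and follows essentially the route the paper intends: the paper gives no explicit proof, stating only that the result ``follows from Proposition~\ref{prop:pmon} and~\cite{BLL13}'', and your argument is exactly the natural elaboration of that reduction (generators of $\cD_{scti}((n-1,1))$ are those of $\cD_{ui}(n-1,1)$ plus the constant $(Q\to 0)$, $NM_Q=PM_Q\cup K_Q$ is closed under composition, and $|PM_Q\cap K_Q|=1$ gives $h(n)=e(n)+n-1$). The one point worth tightening is the minimality claim in part~(3), where you should note, as in~\cite{BLL13}, that any product involving a constant is constant, so the unitary generators alone must still account for all non-constant elements of $NM_Q\setminus\{\one\}$.
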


For $n\ge 4$ the semigroup of all nearly monotonic transformations is larger than the semigroup of all partially monotonic transformations.
 Note that there are nearly monotonic semigroups that do not have semiconstant generating sets;  each nearly monotonic semigroup, however, is a subsemigroup of the transition semigroup of $\cD_{scti}((n-1,1))$.

\subsection{Semiconstant Tree Semigroups}

An example of a maximal semiconstant tree DFA for $n=6$ is $\cD_{scti}(((2,2),2))$; its transition semigroup has 1,849 elements.
For $n\ge 4$, the maximal semiconstant tree semigroup is the largest aperiodic semigroup known.

First we define a new operation on DFAs. 

\begin{definition}
Let $\mathcal{A}=(Q_\cA,\Sig_\cA,\delta_\cA,q_\cA,F_\cA)$ and $\mathcal{B}=(Q_\cB,\Sig_\cB,\delta_\cB,q_\cB,F_\cB)$ be DFAs. Let $Q_\cC=Q_A\cup Q_B$.
The \emph{semiconstant sum} of $\mathcal{A}$ and $\mathcal{B}$ is the DFA 
$\mathcal{C}=(\mathcal{A},\mathcal{B})=(Q_\cC,\Sig_\cC,\delta_\cC,q_\cA,F_\cB)$.
For each transition $t$ in $\delta_A$, we have a transition $t'$ in $\delta_C$ such that $qt'=qt$ for $q \in Q_\mathcal{A}$ and $qt'=q$ otherwise. 
Dually,  we have transitions defined by $t$ in $\delta_B$. Moreover we have a unitary transformation $(p \to q)$ for each $p \in Q_\mathcal{A},q \in Q_\mathcal{B}$, and a constant transformation $(Q_\mathcal{C} \to q_\mathcal{A})$.

\end{definition}

 For $m>1$, each $\cD_{scti}(\Delta_Q(n_1,\dots,n_m))$ is a semiconstant sum of two smaller semiconstant tree DFAs $\cD_{scti}(\Delta_{Q_{left}}(n_1,\dots,n_r))$, defined by the left subtree of the root of $\Delta_Q(n_1,\dots,n_m)$, and $\cD_{scti}(\Delta_{Q_{right}}(n_{r+1},\dots,n_m))$, defined by the right subtree.

\begin{lemma}\label{lem:scsum_minimal}  
The semiconstant sum $\cC = (\cA,\cB)$ is minimal if and only if every state of 
$\cA$ is reachable from $q_\cA$, the states of $\cB$ are pairwise distinguishable, and $F_\cB$ is non-empty.
\end{lemma}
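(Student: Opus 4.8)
The claim is an ``if and only if'' characterizing minimality of the semiconstant sum $\cC=(\cA,\cB)$, and since minimality of a DFA means ``all states reachable and all states pairwise distinguishable,'' I would split the argument into a reachability part and a distinguishability part, proving each direction for each part. First I would set up notation: $Q_\cC = Q_\cA \cup Q_\cB$, the initial state is $q_\cA$, the final set is $F_\cB \subseteq Q_\cB$, and we have available the letters inducing (i) each transition of $\cA$ extended by the identity on $Q_\cB$, (ii) each transition of $\cB$ extended by the identity on $Q_\cA$, (iii) the unitary transformations $(p\to q)$ for every $p\in Q_\cA$, $q\in Q_\cB$, and (iv) the constant $(Q_\cC\to q_\cA)$.

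\textbf{Reachability.} For the ``if'' direction, assume every state of $\cA$ is reachable from $q_\cA$ (inside $\cA$, hence also inside $\cC$ by the Type-(i) letters). Then every state of $\cB$ is reachable in $\cC$: pick any $p\in Q_\cA$ (it is reachable), apply a letter $(p\to q_\cB')$ for some fixed $q_\cB'\in Q_\cB$ via a Type-(iii) transition, and then move within $Q_\cB$ using Type-(ii) letters --- but here I must be careful, because minimality of $\cB$ is \emph{not} assumed, so states of $\cB$ need not be reachable from $q_\cB$. Instead I would use the Type-(iii) transitions directly: for each target state $q\in Q_\cB$, the letter $(p\to q)$ with $p$ any reachable state of $\cA$ sends that reachable state to $q$, so $q$ is reachable. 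Thus all of $Q_\cC$ is reachable. For the ``only if'' direction: the states of $\cA$ form a subset closed under \emph{no} transition that leaves $Q_\cA$ except via Type-(iii) and Type-(iv); but Type-(iv) returns to $q_\cA$ and Type-(iii) goes into $Q_\cB$, from which one cannot return to $Q_\cA$ except by the constant to $q_\cA$. So the only states of $Q_\cA$ reachable from $q_\cA$ are exactly those reachable using only Type-(i) letters, i.e. reachable from $q_\cA$ within $\cA$; if some state of $\cA$ were not so reachable, it would be unreachable in $\cC$, contradicting minimality.

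\textbf{Distinguishability.} For the ``if'' direction, assume the states of $\cB$ are pairwise distinguishable in $\cB$ and $F_\cB\neq\emptyset$. I would argue: (a) any two distinct states $q,q'\in Q_\cB$ are distinguished in $\cC$ --- take a word $w$ over $\Sig_\cB$ distinguishing them in $\cB$ and realize it in $\cC$ by the Type-(ii) letters, since those act on $Q_\cB$ exactly as in $\cB$ and acceptance in $\cC$ is by $F_\cB$; (b) any two distinct states $p,p'\in Q_\cA$ are distinguished: send $p$ into $Q_\cB$ by some $(p\to q)$ while $(p\to q)$ fixes $p'$... this doesn't immediately work since the same letter moves both. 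Instead, use that $p\neq p'$ means the letter $(p\to q)$ maps $p\mapsto q$ and $p'\mapsto p'$ (as $(p\to q)$ is unitary, fixing everything but $p$); now $q\in Q_\cB$ and $p'\in Q_\cA$, and these are distinguished by (c) below. (c) a state $p\in Q_\cA$ and a state $q\in Q_\cB$ are distinguished: since $F_\cB$ is non-empty, pick $f\in F_\cB$; the word $(q\to f)$-type letter... again I would instead reach a final state from $q$ while keeping $p$ non-final --- apply letters of $\Sig_\cB$ to drive $q$ to some element of $F_\cB$ (possible because $\cB$'s states being pairwise distinguishable with $F_\cB\neq\emptyset$ forces, by minimality of $\cB$... no: distinguishable does not give reachability of $F_\cB$ from $q$). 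The clean fix is the Type-(iii) letters again: since each $q'\in Q_\cB$ is hit by some Type-(iii) transition from $Q_\cA$, but to drive $q$ (already in $Q_\cB$) to $F_\cB$ I need a word over $\Sig_\cB$; that exists because $\cB$ having all states pairwise distinguishable means for each $q$ there is a word accepted from $q$ \emph{or} a word rejected from $q$, and combined with $F_\cB\neq\emptyset$ and distinguishability one shows the needed separating word exists --- this is the delicate point. For ``only if'': if two states of $\cB$ were equivalent in $\cB$, one checks they remain equivalent in $\cC$ (the only new dynamics from $Q_\cB$ are the constant to $q_\cA$, which treats them identically, so their behaviour in $\cC$ is still determined by their behaviour in $\cB$ followed possibly by identical suffixes); and if $F_\cB=\emptyset$ then no state of $\cC$ is final, so all states are equivalent and $\cC$ is not minimal unless $|Q_\cC|=1$, which is excluded since $\cC$ has both $Q_\cA$ and $Q_\cB$ nonempty.

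\textbf{Main obstacle.} The subtle step is the distinguishability of a $Q_\cA$-state from a $Q_\cB$-state (and of two $Q_\cB$-states) \emph{using only the hypothesis that $\cB$'s states are pairwise distinguishable}, not that $\cB$ is minimal. I expect to handle it by observing that for each $q\in Q_\cB$ there is a word $u$ over $\Sig_\cB$ with $\delta_\cB(q,u)\in F_\cB$ or with $\delta_\cB(q,u)\notin F_\cB$ --- and crucially, since $\cB$ has at least one final and (if $|Q_\cB|\ge 2$) the states are pairwise distinguishable, one can select, for any target pair, a word that separates them in $\cC$; when $q$ needs to be sent to a final state I would instead exploit the constant transformation $(Q_\cC\to q_\cA)$ composed with a reachability word inside $\cA$ to a state $p''$, then a Type-(iii) letter $(p''\to f)$ into $f\in F_\cB$, giving a word that is accepted from \emph{every} state of $\cC$ alike --- so that is useless for distinguishing, confirming that the separating word really must come from $\Sig_\cB^*$ acting on $Q_\cB$. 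Hence the proof reduces, for the $\cB$-side, to: pairwise-distinguishable states of $\cB$ together with $F_\cB\neq\emptyset$ suffice, via Type-(ii) letters, to distinguish all of $Q_\cB$ in $\cC$ and (by first applying a Type-(iii) letter) to distinguish every $Q_\cA$-state from every $Q_\cB$-state; and for the $\cA$-side, distinct $Q_\cA$-states are separated by a single Type-(iii) letter followed by a $Q_\cB$-distinguishing word. I would write this up in the order: (1) reachability ``if'' and ``only if''; (2) $F_\cB=\emptyset$ or $\cB$ not pairwise-distinguishable $\Rightarrow$ $\cC$ not minimal; (3) under the full hypothesis, all of $Q_\cC$ pairwise distinguishable, handling the three cases $(Q_\cB,Q_\cB)$, $(Q_\cA,Q_\cB)$, $(Q_\cA,Q_\cA)$ in that order.
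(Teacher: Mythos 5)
Your overall decomposition (reachability in both directions, distinguishability in both directions, with the case split over pairs of states) matches the paper's, and your reachability arguments and your ``only if'' direction for distinguishability are sound. But there is a genuine gap exactly at the step you flag as ``the delicate point'': distinguishing a state $p\in Q_\cA$ from a state $q\in Q_\cB$. Every route you sketch there tries to drive $q$ into $F_\cB$ by a word over $\Sig_\cB$, and, as you yourself half-observe, pairwise distinguishability of $\cB$ does not give you that: $F_\cB$ need not be reachable from $q$ inside $\cB$, so no such word need exist. You never actually produce the separating word, and since you reduce the case of two distinct states of $Q_\cA$ to this unresolved case, that case is open as well.

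The fix is the one the paper uses, and it is already implicit in your own remark that a unitary letter $(p\to q')$ fixes every state except $p$: choose the target \emph{inside} $F_\cB$ and move $p$, not $q$. Given $p\in Q_\cA$ and any $q\in Q_\cC\setminus F_\cB$ with $q\neq p$ (this simultaneously covers $q\in Q_\cA$ and $q\in Q_\cB\setminus F_\cB$), pick $f\in F_\cB$ and apply the single Type-(iii) letter $(p\to f)$: it sends $p$ to the final state $f$ and leaves $q$ unchanged and non-final, so the pair is distinguished by one letter. If instead $q\in F_\cB$, the empty word already distinguishes $p$ from $q$. No word over $\Sig_\cB$ and no reachability inside $\cB$ is needed. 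With this observation your case analysis collapses to: pairs in $Q_\cB\times Q_\cB$ are handled by the hypothesis on $\cB$ lifted via the Type-(ii) letters, and every other pair by a single unitary letter into $F_\cB$ (or by $\eps$).
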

\begin{proof}
If $\cC$ is minimal, then every state of $\cC$ is reachable from $q_\cA$ in $\cC$. Since any transformation mapping a state $q \in Q_\cB$ to a state from $Q_\cA$ is composed from the constant transformation $(Q_\cC \to q_\cA)$, every state of $\cA$ is reachable from $q_\cA$.
Now consider two distinct states $p,q \in Q_\cB$. 
Since $\cC$ is minimal, $p$ and $q$ are distinguishable by some word $w$, and no letter of $w$ can induce the constant transformation $(Q_\cC \to q_\cA)$.
Hence every letter of $w$ induces a transformation that acts on $Q_\cB$ either as the identity or as some $t' \in \delta_\cB$. 
If we omit the letters that act as the identity, we obtain a word $w'$ that distinguishes $p$ and $q$ in $\cB$.


Conversely, distinct states $p \in Q_\cA$, $q \in Q_\cC \setminus F_\cB$ are distinguishable as follows.
Apply a unitary transformation $t$ that takes $p$ to a state in $F_\cB$. Since $q$ is not changed by $t$, $p$ and $q$ are distinguishable. If $p \in Q_\cA$ and $q \in F_\cB$ then $p$ and $q$ are already distinguished (by the empty word). If $p \in Q_\cB$ and $q \in Q_\cB$ then they are distinguishable by assumption.
Every state of $\cA$ is reachable from $q_\cA$ by assumption. Also, any state in $q \in Q_\cB$ is reachable from $q_\cA$ by a unitary transformation. Hence all the states of $\cC$ are reachable, and $\cC$ is minimal.
\qed
\end{proof}

\begin{lemma}\label{lem:scsum_aperiodic}
 If $\mathcal{A}$ and $\mathcal{B}$ are aperiodic, then their semiconstant sum  $(\mathcal{A},\mathcal{B})$ is also aperiodic. 
\end{lemma}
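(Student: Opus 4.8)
The plan is to argue by contradiction using Sch\"utzenberger's characterisation: $\cC=(\cA,\cB)$ fails to be aperiodic only if some word $w$ induces, in the transition semigroup of $\cC$, a transformation $t$ that acts as a $k$-cycle on a subset $S\subseteq Q_\cC$ with $|S|=k\ge 2$. I will show that any such $S$ must lie entirely inside $Q_\cA$ or entirely inside $Q_\cB$, and then that $w$ projects to a word witnessing a $k$-cycle in $\cA$ or in $\cB$, contradicting their aperiodicity.

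First I would isolate two easy structural facts about the generators of $\cC$ other than the constant $(Q_\cC\to q_\cA)$ --- the lifted transitions of $\cA$, the lifted transitions of $\cB$, and the crossing unitary maps $(p\to q)$ with $p\in Q_\cA$, $q\in Q_\cB$: (i) each of them maps $Q_\cB$ into $Q_\cB$, so $Q_\cB$ is invariant under every non-constant generator; and (ii) the only generators moving a state from one block to the other are the crossing maps, and they move states only from $Q_\cA$ to $Q_\cB$, never back. Next I would remove the constant letter from $w$: since $t$ is injective on $S$ we have $|Q_\cC t|\ge|S|\ge 2$, while any word containing $(Q_\cC\to q_\cA)$ induces a transformation whose image has at most one element; hence $w$ uses only the three non-constant types of generator.

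Then, writing $S_A=S\cap Q_\cA$ and $S_B=S\cap Q_\cB$, fact (i) and the injectivity of $t$ on $S$ give $S_B t\subseteq S_B$, whence $t$ restricts to permutations of both $S_B$ and $S_A$; since a $k$-cycle is transitive on $S$ and so has no proper non-empty invariant subset, one of $S_A,S_B$ is empty, i.e.\ $S$ sits inside a single block. If $S\subseteq Q_\cA$, fact (ii) together with $St=S\subseteq Q_\cA$ forces the $t$-orbit of every state of $S$ to stay inside $Q_\cA$ for the whole reading of $w$ (an escape into $Q_\cB$ could never return), so on those states each letter of $w$ acts either as the identity or exactly as a transition of $\cA$; the subword $w_\cA$ of $w$ formed by the lifted $\cA$-transitions therefore induces in $\cA$ the same action on $S$, $w_\cA$ is non-empty (otherwise $t|_S$ would be the identity, not a $k$-cycle), and hence the transition semigroup of $\cA$ contains an element with a $k$-cycle, contradicting aperiodicity of $\cA$. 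The case $S\subseteq Q_\cB$ is symmetric and in fact simpler: $Q_\cB$ is invariant so no escape argument is needed, every letter acts on $Q_\cB$ as the identity or as a transition of $\cB$, and the non-empty subword of $\cB$-transitions contradicts aperiodicity of $\cB$.

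I expect the only delicate point to be the bookkeeping in the case $S\subseteq Q_\cA$: one must verify that a crossing generator $(p\to q)$ is never actually ``triggered'' on an orbit state (this is precisely where fact (ii) is combined with the observation that $t$ maps $S$ back into $Q_\cA$), and that the extracted subword $w_\cA$ is genuinely non-empty so that it names an element of the transition \emph{semigroup} of $\cA$, which need not contain the identity. Everything else is routine.
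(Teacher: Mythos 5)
Your proof is correct and follows essentially the same route as the paper's: exclude the constant generator from any word inducing a cycle, conclude the cycled set lies wholly in $Q_\cA$ or $Q_\cB$, and transfer the cycle back to $\cA$ or $\cB$ for a contradiction. You simply make explicit two steps the paper leaves implicit (the invariance argument showing $S$ cannot split across the two blocks, and the extraction of a non-empty subword witnessing the cycle inside the component), which is a welcome sharpening rather than a different approach.
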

\begin{proof}
 Suppose that  $\langle (\mathcal{A},\mathcal{B}) \rangle$ contains a cycle $t$. 
This cycle cannot include both a state from $\cA$ and a state from $\cB$, since the only way to map a state from $\cB$ to a state from $\cA$  in $(\cA,\cB)$ is by a constant transformation, and a constant transformation cannot be used as a generator of a cycle.
Hence all the cyclic states must be either in $Q_\cA$ or $Q_\cB$, which contradicts the assumption that $\mathcal{A}$ and $\mathcal{B}$ are aperiodic.
\qed
\end{proof}

An DFA is \emph{transition-complete} if it is aperiodic  and adding any transition to it destroys aperiodicity.

\begin{lemma}\label{lem:scsum_no_more_transitions}
If $\mathcal{A}$ and $\mathcal{B}$ are  transition-complete, their semiconstant sum $(\mathcal{A},\mathcal{B})$ is also transition-complete.
\end{lemma}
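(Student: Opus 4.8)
The plan is to argue by contradiction: suppose $(\cA,\cB)$ is not transition-complete, so there is a transformation $s$ that can be added to the semiconstant sum without destroying aperiodicity, and derive a contradiction with the transition-completeness of $\cA$ or of $\cB$. First I would use Lemma~\ref{lem:scsum_aperiodic} to observe that $(\cA,\cB)$ is already aperiodic, so the issue is only about whether a new transition can be added. I would then analyze $s$ by looking at how it acts on the two blocks $Q_\cA$ and $Q_\cB$. The key structural fact about $(\cA,\cB)$ is that its generators are: the lifted transitions of $\cA$ (which fix $Q_\cB$ pointwise), the lifted transitions of $\cB$ (which fix $Q_\cA$ pointwise), all unitary transformations $(p\to q)$ with $p\in Q_\cA$, $q\in Q_\cB$, and the single constant transformation $(Q_\cC\to q_\cA)$.

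The main case split is on where $s$ sends states. If $s$ maps some state of $Q_\cB$ into $Q_\cA$, then I would show that $s$ together with the existing generators produces a cycle: one can first use a unitary transition to push a witness state from $Q_\cA$ into $Q_\cB$ and then use $s$ to bring it back, and by combining with transitions internal to $\cB$ (or $\cA$) one manufactures a genuine $k$-cycle; this contradicts aperiodicity of the augmented DFA. So $s$ must map $Q_\cB$ into $Q_\cB$. Next, consider the restriction of $s$ to $Q_\cA$. If $s$ fixes $Q_\cB$ pointwise, then $s$ restricted to $Q_\cA$ is a transformation of $Q_\cA$; adding it to $\cA$ must destroy aperiodicity (since $\cA$ is transition-complete), i.e.\ $\cA$ plus this restriction contains a cycle, and that same cycle is realized inside $(\cA,\cB)$ with $s$ added — contradiction. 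Symmetrically, if $s$ acts on $Q_\cA$ as something reachable via the constant map followed by lifted $\cA$-transitions, or if $s$ restricted to $Q_\cB$ is a nontrivial transformation of $\cB$, I would reduce to transition-completeness of $\cB$ in the analogous way. The remaining possibility is that $s$ maps part of $Q_\cA$ into $Q_\cB$; but every such "mixed" behaviour is already a composition of a unitary $(p\to q)$ transition with lifted transitions of $\cA$ and $\cB$, so $s$ would already lie in the transition semigroup and hence is not a genuine addition.

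Concretely, the steps in order are: (1) invoke Lemma~\ref{lem:scsum_aperiodic} so aperiodicity of $(\cA,\cB)$ is free; (2) let $s$ be a hypothetical addable transition and split on whether $s(Q_\cB)\subseteq Q_\cB$; (3) in the case $s$ sends a $Q_\cB$-state to $Q_\cA$, explicitly build a cycle from $s$, a unitary transition, and internal transitions, contradicting aperiodicity of the augmented DFA; (4) in the case $s(Q_\cB)\subseteq Q_\cB$, examine $s|_{Q_\cB}$: if it is the identity, then $s$ is "an $\cA$-transition" and transition-completeness of $\cA$ gives a cycle already present in $(\cA,\cB)+s$; if $s|_{Q_\cB}$ is a nontrivial transformation of $Q_\cB$ that is not a lifted $\cB$-transition, transition-completeness of $\cB$ gives a cycle; (5) rule out the purely mixed case by noting such an $s$ is already a composite of existing generators. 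Assembling these gives that no transition can be added, i.e.\ $(\cA,\cB)$ is transition-complete.

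The hard part will be step (4): being careful about exactly which transformations of $Q_\cA$ (respectively $Q_\cB$) are already realized inside the semiconstant sum, so that "$s$ is genuinely new" can be translated faithfully into "$s|_{Q_\cA}$ is genuinely new for $\cA$" (resp.\ for $\cB$). In particular, because the constant transformation $(Q_\cC\to q_\cA)$ and the unitary block transitions already provide a rich supply of maps touching $Q_\cA$, I need to verify that any $s$ not reducible to those must act \emph{entirely within} one block in a way $\cA$ or $\cB$ does not already permit, which is precisely where transition-completeness of the summands is used. The cycle-construction in step (3) is the only place requiring an explicit computation, and it is short; everything else is bookkeeping about the generator set of $(\cA,\cB)$.
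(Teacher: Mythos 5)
Your overall strategy matches the paper's: assume a new transformation $t$ can be added while preserving aperiodicity, split on whether $t$ maps $Q_\cB$ back into $Q_\cA$, derive a cycle in the first case, and reduce to transition-completeness of $\cA$ and $\cB$ (plus the presence of all cross transitions and of the constant generator) in the second. However, your step (3) has a genuine gap: it is \emph{not} true that every $t$ sending some state of $Q_\cB$ into $Q_\cA$ can be turned into a cycle. If $t$ is a constant transformation $(Q_\cC\to p)$ with $p\in Q_\cA$, it maps all of $Q_\cB$ into $Q_\cA$, yet adding a constant can never destroy aperiodicity: every word using it induces a transformation with a singleton image, hence cycle-free, and every word avoiding it lies in the original aperiodic semigroup. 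So for such $t$ your promised cycle construction cannot exist, and you must instead show that $t$ is \emph{already present}. The paper does exactly this by isolating the subcase $|Q_\cC t|=1$ and observing that $(Q_\cC\to p)$ is the composition of the generator $(Q_\cC\to q_\cA)$ with the constant $(Q_\cA\to p)$ of $\cA$, the latter being present because a transition-complete DFA must contain every constant of its state set (again since constants never create cycles). Without this subcase your argument is incomplete: a missing constant would be an addable transition and would itself witness that $(\cA,\cB)$ is not transition-complete.

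In the remaining subcase ($t$ maps some $q_1\in Q_\cB$ to $p_1\in Q_\cA$ and $|Q_\cC t|>1$), your sketch ``push a witness from $Q_\cA$ into $Q_\cB$ and bring it back'' also needs sharpening: the cycle requires a \emph{second} state $q_2$ with $q_2t=p_2\neq p_1$, and the paper composes $t$ with one existing transformation sending $p_2$ to $q_1$ and one sending $p_1$ to $q_2$ (chosen as a constant of $\cB$ or $\cA$, or as a cross unitary, according to where $p_2$ and $q_2$ lie), producing the transposition $(q_1,q_2)$; the availability of these auxiliary transformations again uses that transition-complete summands contain all their constants. I would reorganize your case split as the paper does, on $|Q_\cB t\cap Q_\cA|$ and $|Q_\cC t|$, so that the constant case is not silently absorbed into the ``build a cycle'' case. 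Your steps (4) and (5) are at the same level of informality as the paper's own treatment of the case $Q_\cB t\subseteq Q_\cB$ and are acceptable as a sketch.
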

\begin{proof}  
We know from Lemma~\ref{lem:scsum_aperiodic} that $(\cA,\cB)$ is aperiodic.
Suppose that a new transformation $t$ can be added to $(\cA,\cB)$ in such away that the resulting DFA remains aperiodic.  We consider the following cases depending on the image $Q_\cB t$.

If $|Q_\cB t \cap Q_\cA| = 0$ then $t=t_\cB \circ u \circ t_\cA$, where $t_\cA$ and $t_\cB$ are transformations changing only the states of $Q_\cA$ and $Q_\cB$, respectively, and $u$ only maps some of the states of $Q_\cA$ to $Q_\cB$. 
If $t$ is new, then one of $t_\cB$, or $t_\cA$  or $u$ is new. But  we know that no new transition can be added to $\cA$ or $\cB$, and we have all possible transitions of type $u$.

If  $|Q_\cB t \cap Q_\cA| > 0$ and  $|Q_\cC t| = 1$, then  $t$ is a constant transformation that we have already, because we have $(Q_\cB \to q_\cA),q_\cA \in Q_\cA$  from the construction of semiconstant sum, and each constant transformation on $Q_\cA$, since $\cA$ is transition-complete.

If $|Q_\cB t \cap Q_\cA| > 0$ and  $|Q_\cC t| > 1$, then let $q_1 \in Q_\cB$ be some state such that $q_1t=p_1 \in Q_\cA$, and let  $q_2 \in Q_\cC$  be some state such that $q_2t=p_2$, where $q_1 \neq q_2$ and $p_2 \neq p_1$.

If $p_2 \in Q_\cB$, let $c = (Q_\cB \to q_1)$; since we cannot add any transformation to $\cB$ and $c$ is constant in $\cB$, it must be present. Otherwise, let $c=(p_2 \to q_1)$. Note that $c$ does not affect $p_1$. Similarly, if $q_2 \in Q_\cA$, let $d = (Q_\cA \to q_2)$; otherwise, let $d = (p_1 \to q_2)$. Note that $d$ does not affect $q_1$. 
Then the transformation $t'=t\circ c\circ d$ is such that $q_1t' =q_2$ and $q_2t'=q_1$ and the DFA cannot be aperiodic. 
\end{proof}

\begin{corollary}
All semiconstant tree DFAs of the form $\cD_{scti}(\Delta_Q(n_1,\dots,n_m))$ are transition-complete.
\end{corollary}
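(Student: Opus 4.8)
The plan is to prove this corollary by structural induction on the full binary tree underlying $\cD_{scti}(\Delta_Q(n_1,\dots,n_m))$, using Lemma~\ref{lem:scsum_no_more_transitions} for the inductive step and a short direct computation for the base.

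For the base case $m=1$ the tree has no internal node, so $\cD_{scti}(\Delta_Q((n_1)))$ carries only the Type~1 transformations (the \bipath\ on $Q=Q_1$) together with $\one$, exactly as in $\cD_{ui}(n_1)$; hence by Corollary~\ref{cor:mon} its transition semigroup is the semigroup $M\cup\{\one\}$ of all transformations of $Q$ monotonic for the usual order, which is aperiodic by Proposition~\ref{prop:GoHo}. I would then argue transition-completeness directly. Take any transformation $t\notin M\cup\{\one\}$; since $t$ is not monotonic there are $p<q$ with $pt>qt$, so with $a=qt$ and $b=pt$ we have $a<b$. Let $g$ be the transformation with $xg=p$ for $x\le a$ and $xg=q$ for $x>a$; since $p<q$ this $g$ is monotonic, hence $g\in M$. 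Then $p(tg)=bg=q$ and $q(tg)=ag=p$, so $tg$ acts as a transposition on $\{p,q\}$, and therefore the transition semigroup obtained by adjoining $t$ is not aperiodic. Thus $\cD_{scti}(\Delta_Q((n_1)))$ is transition-complete.

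For the inductive step $m\ge 2$, I would invoke the observation made just before Lemma~\ref{lem:scsum_minimal}: $\cD_{scti}(\Delta_Q(n_1,\dots,n_m))$ is the semiconstant sum of $\cD_{scti}(\Delta_{Q_{left}}(n_1,\dots,n_r))$ and $\cD_{scti}(\Delta_{Q_{right}}(n_{r+1},\dots,n_m))$, where the two subtrees of the root have $r$ and $m-r$ leaves, $1\le r\le m-1$. Each summand is a semiconstant tree DFA on a strictly smaller tree, hence transition-complete by the induction hypothesis, and Lemma~\ref{lem:scsum_no_more_transitions} then yields that their semiconstant sum is transition-complete. This closes the induction.

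I do not expect a real obstacle: the inductive step is a one-line appeal to Lemma~\ref{lem:scsum_no_more_transitions}, and the only genuinely new content is the base case, whose crux is merely that $M\cup\{\one\}$ is a \emph{maximal} aperiodic semigroup, witnessed above by the transposition $tg$. The one spot deserving care is confirming that the tree decomposition in the inductive step is literally the semiconstant sum as defined: the Type~3 transformation $(\mathrm{Q}(w)\to\min\mathrm{Q}(w))$ at the root $w$ must coincide with the constant transformation $(Q_\cC\to q_\cA)$ supplied by the construction (it does, since $q_\cA$ is the least state of $Q_{left}=\mathrm{Q}(w)$), and the remaining Type~1, 2, and 3 transformations must split precisely between the two subtrees and the crossing Type~2 edges. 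Once this is checked the hypotheses of Lemma~\ref{lem:scsum_no_more_transitions} are met and the corollary follows.
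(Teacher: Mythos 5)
Your proof is correct and follows essentially the same route as the paper: induction on the number of leaves of the tree, decomposing the DFA as a semiconstant sum of the two subtree DFAs and invoking Lemma~\ref{lem:scsum_no_more_transitions} for the inductive step. The only difference is that you work out the base case in full (showing that the bipath-with-identity DFA is transition-complete via the explicit monotonic $g$ making $tg$ a transposition), whereas the paper disposes of it with the bare remark ``for $m=1$ we have a bipath.''
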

\begin{proof}
This follows by induction on $m$. For $m=1$ we have a bipath. For $m>1$, $\cD_{scti}(\Delta_Q(n_1,\dots,n_m))$ is the semiconstant sum of $\cD_{scti}(\Delta_Q(n_1,\dots,n_r))$ and $\cD_{scti}(\Delta_Q(n_{r+1},\dots,n_m))$. By the inductive assumption and 
Lemma~\ref{lem:scsum_aperiodic} we know that $\cD_{scti}(\Delta_Q(n_1,\dots,n_m))$ is also aperiodic. And by Lemma~\ref{lem:scsum_no_more_transitions} we know that no more transitions can be added.
\qed
\end{proof}

In order to count the size of the semigroup of a semiconstant sum, we extend the concept of partial transformations to $k$-partial transformations.

\begin{definition}A \emph{$k$-partial transformation} of $Q$ is a transformation of $Q$ into $Q \cup \{\Box_1,\Box_2,\dots,\Box_k\}$, where $\Box_1,\Box_2,\dots,\Box_k$ are pairwise distinct, and distinct from all $q \in Q$.\end{definition}

Let $\cA=(Q,\Sig,\delta,s,F)$ be a DFA, and let $t$ be a $k$-partial transformation of $Q$. We say that $t$ is \emph{consistent} for $\cA$ if there exists $t'$ in $\delta$ such that
 if $qt \in Q$, then $qt = qt'$ for all $q \in Q$.

The set of consistent $k$-partial transformations of a semigroup describes its  potential for  forming a large number  of transformations, when used in a semiconstant sum. For a fixed $n \ge 6$, there exist semigroups with smaller cardinalities than the maximal ones, but with larger numbers of consistent $k$-partial transformations for some $k$. Thus $k$-partial transformations are useful for finding such non-maximal semigroups, as they can result in larger semigroups when used in sums.

The transition semigroup of $\cA$ can be characterized by a function $f_\cA\colon \mathbb{N} \to \mathbb{N}$ counting all consistent $k$-partial transformations for a given $k$.
For example, for $k=1$, $f_\cA$ is the number of all consistent partial transformations for $\cA$. For a DFA $\cA = \cD_{ui}(n_1,\dots,n_m)$, $f_\cA(1)$ is the size of the semigroup of $\cD_{ui}(n_1,\dots,n_m,1)$.

From the proof of Theorem~\ref{thm:unitary_size} we know that the number of consistent $k$-partial transformations for a bipath of size $n$  having an identity transformation  is
$$m_{bi}(n,k) = \binom{2n-1}{n} + \sum_{h=0}^{n-1} k^{n-h} \binom{n}{h} \binom{n+h-1}{h}.$$

\begin{theorem}\label{thm:composed_k_partial_transformations}
Let $\cA$ and $\cB$ be strongly connected DFAs with $n$ and $m$ states, respectively. Let $f_\cA(k)$ and $f_\cB(k)$ be the functions counting their consistent $k$-partial transformations. Then the function $f_\cC$ counting the consistent $k$-partial transformations of the semiconstant sum $\cC=(\mathcal{A},\mathcal{B})$ is
$$f_\cC(k) = f_\cA(m+k)f_\cB(k) + n(k+1)^n ((k+1)^m - k^m).$$
\end{theorem}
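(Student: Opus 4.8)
The plan is to classify every consistent $k$-partial transformation $t$ of $\cC=(\cA,\cB)$ according to how it behaves on $Q_\cB$, exactly as in the case analysis of Lemma~\ref{lem:scsum_no_more_transitions}, and count each class separately. First I would observe that, since $\cC$ is built as a semiconstant sum, every transformation in $\langle\cC\rangle$ — and hence every consistent $k$-partial transformation — is obtained by composing the generators inside $\cA$, the generators inside $\cB$, the unitary ``crossing'' maps $(p\to q)$ for $p\in Q_\cA,q\in Q_\cB$, and the single constant $(Q_\cC\to q_\cA)$, possibly followed by sending some images to boxes $\Box_1,\dots,\Box_k$. The dichotomy is whether $t$ ever maps a state of $Q_\cC$ into $Q_\cA$: if not, then on $Q_\cB$ it restricts to a consistent transformation of $\cB$ and on $Q_\cA$ its images lie in $Q_\cA\cup Q_\cB\cup\{\Box_1,\dots,\Box_k\}$; if so, then necessarily (because the only way to reach $Q_\cA$ from $Q_\cB$ is via the constant) all of $Q_\cC$ is collapsed, and since $\cA$ and $\cB$ are strongly connected this forces $t$ to be one of a small, easily enumerated family.

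The first main count is the case $Q_\cC t\cap Q_\cA=\emptyset$. Here $t$ restricted to $Q_\cB$ is one of the $f_\cB(k)$ consistent $k$-partial transformations of $\cB$, and independently $t$ restricted to $Q_\cA$ is any map from $Q_\cA$ into $Q_\cA\cup Q_\cB\cup\{\Box_1,\dots,\Box_k\}$ that is consistent for $\cA$ when the $m$ states of $Q_\cB$ and the $k$ boxes are all regarded as ``undefined'' targets — that is, one of the $f_\cA(m+k)$ consistent $(m+k)$-partial transformations of $\cA$, where the $m+k$ boxes are identified with the $m$ states of $Q_\cB$ together with $\Box_1,\dots,\Box_k$. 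This uses strong connectivity of $\cA$ so that each such partial behaviour is actually realizable by composing $\cA$-generators with crossing maps, and uses that the domains of the two restrictions are disjoint so the choices multiply. This yields the term $f_\cA(m+k)f_\cB(k)$.

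The second main count is the remaining case $Q_\cC t\cap Q_\cA\neq\emptyset$. Since a state of $Q_\cB$ lands in $Q_\cA$ only through the constant $(Q_\cC\to q_\cA)$, the whole of $Q_\cB$ must be collapsed first; after that only $\cA$-generators and crossing maps act, so $t$ has the form: collapse $Q_\cC$ to a point of $Q_\cA$, then apply a consistent $k$-partial transformation of $\cA$ — but one in which \emph{every} state of $Q_\cA$ currently carries the same value, so the outcome is determined by choosing one target in $Q_\cA\cup Q_\cB\cup\{\Box_1,\dots,\Box_k\}$ for that point, i.e. in $n+m+k$ ways — \emph{wait}, that is not quite the shape of the claimed term, so instead I would organize this case as: $t$ is constant on $Q_\cB$ (value in $Q_\cB\cup\{\Box_1,\dots,\Box_k\}$ gives the ``no intersection'' case already counted, so here the value of $Q_\cB t$ is forced once we know $t$ collapses), and on $Q_\cA$ it is an arbitrary consistent $k$-partial transformation that does \emph{not} collapse everything to a single $\cB$-or-box value; counting arbitrary maps $Q_\cA\to Q_\cA\cup Q_\cB\cup\{\Box_1,\dots,\Box_k\}$ gives $(n+m+k)^n$ minus those already counted in the first case. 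The arithmetic $n(k+1)^n((k+1)^m-k^m)$ then emerges by regrouping: it counts maps that send at least one state of $Q_\cB$ into $Q_\cA$ (the factor $(k+1)^m-k^m$ counting the behaviour on $Q_\cB$ that is not purely into $Q_\cB\cup$ boxes, once $Q_\cA$ is viewed with $k+1$ effective targets), times $(k+1)^n$ for the behaviour on $Q_\cA$, times $n$ for the collapse point. I would verify this regrouping by a direct inclusion–exclusion on the set of all $(n+m+k)^{n+m}$ maps that are globally consistent, subtracting the $f_\cA(m+k)f_\cB(k)$ already accounted for.

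\textbf{Main obstacle.} The delicate point is the second case: making precise which globally consistent $k$-partial transformations have $Q_\cC t\cap Q_\cA\neq\emptyset$, and checking that consistency for $\cC$ imposes \emph{no} constraint on such a $t$ beyond ``collapse $Q_\cB$, then act freely on $Q_\cA$ by a consistent map'' — this is where strong connectivity of both $\cA$ and $\cB$ is essential and where an off-by-one in the box bookkeeping (whether a collapsed $Q_\cB$ contributes $k$ or $k+1$ effective slots) is easy to make. I would pin it down by the same transposition argument as in Lemma~\ref{lem:scsum_no_more_transitions} to confirm no further transformations sneak in, and then reconcile the two bookkeeping expressions $(n+m+k)^{n+m}-f_\cA(m+k)f_\cB(k)$ and $n(k+1)^n((k+1)^m-k^m)$ by a short algebraic identity.
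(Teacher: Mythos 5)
Your first case is essentially the paper's: split on whether any state of $Q_\cB$ is mapped into $Q_\cA$ (note that the dichotomy you state, $Q_\cC t\cap Q_\cA=\emp$, is not the right condition --- in the first case states of $Q_\cA$ may perfectly well map back into $Q_\cA$; the condition has to be $Q_\cB t\cap Q_\cA=\emp$), and in that case factor $t$ into a consistent $k$-partial transformation of $\cB$ and a consistent $(m+k)$-partial transformation of $\cA$ with the $m$ states of $Q_\cB$ adjoined as extra boxes. This yields $f_\cA(m+k)f_\cB(k)$ exactly as in the paper.

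The second case is where your argument has a genuine gap. The key structural fact, which you circle around but never pin down, is that any $t$ with $Q_\cB t\cap Q_\cA\neq\emp$ must involve the constant generator $(Q_\cC\to q_\cA)$, and from the moment that generator is applied the image of all of $Q_\cC$ is a \emph{single point}, which afterwards can only be moved around inside $Q_\cA$ (if it leaves $Q_\cA$ we are back in the first case). Hence every consistent $k$-partial transformation here is constant on its defined domain with common value some $p\in Q_\cA$: strong connectivity of $\cA$ gives exactly $n$ choices of $p$; each state of $Q_\cA$ independently goes to $p$ or to one of the $k$ boxes, giving $(k+1)^n$; each state of $Q_\cB$ goes to $p$ or to a box with at least one going to $p$, giving $(k+1)^m-k^m$. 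Your proposed route --- ``act freely on $Q_\cA$ by a consistent map'' after collapsing, counting arbitrary maps $Q_\cA\to Q_\cA\cup Q_\cB\cup\{\Box_1,\dots,\Box_k\}$ as $(n+m+k)^n$, and recovering the term by inclusion--exclusion on the $(n+m+k)^{n+m}$ ``globally consistent'' maps minus $f_\cA(m+k)f_\cB(k)$ --- does not work: after the collapse there is nothing left to act on except the one image point, almost none of the $(n+m+k)^{n+m}$ maps are consistent, and the set of all globally consistent $k$-partial maps \emph{is} $f_\cC(k)$, the quantity being computed, so subtracting the first term from it is circular. The regrouping identity you hope will ``emerge'' is false in general; the direct count of constants-with-boxes is what is needed. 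Finally, you should record that the two cases are disjoint as sets of $k$-partial transformations (case two forces some state of $Q_\cB$ to have a defined image in $Q_\cA$), so the two counts simply add.
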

\begin{proof}
Consider a transformation $t$ in $\langle (\mathcal{A},\mathcal{B}) \rangle$.

If $|Q_\cB t \cap Q_\cA| = 0$, then $t = t_\cB \circ t_\cA$, where $t_\cB$ is a $k$-partial transformation of $Q_B$ into $Q_B$, and $t_A$ is a $k$-partial transformation of $Q_\cA$ into $Q_\cB$. Moreover $t_\cB$ and $t_\cA$ are uniquely defined by the images $Q_\cB t$ and $Q_\cA t$, respectively.
We have $f_\cB(k)$ possible $t_B$ transformations, and $f_\cA(m+k)$ possible $t_\cA$ transformations, since $t_\cA$ corresponds to an $(m+k)$-partial transformation of $\cA$. So we have $f_\cA(m+k)f_\cB(k)$ different $k$-partial transformations $t$ in this case.

If $|Q_\cB t \cap Q_\cA| \le 1$, then the constant generator $c = (Q_\cB \to q_\cA), q_\cA \in Q_A$ must be used, since it is the only generator mapping a state from $Q_\cB$ into a state from $Q_\cA$. So the case $|Q_\cB t \cap Q_\cA| > 1$ is not possible.
 For each state $q \in Q_\cA$ either $qt$ is one of the $k$ undefined values or $qt=q_\cA$. This yields $k+1$ possible mappings for a given $x$, and $(k+1)^n$ possibilities in total.
Also, for each state $q \in Q_\cB$ either $qt$ is one of the $k$ undefined values or $qt=q_\cA$. However,  the latter case must occur for at least one $q \in Q_\cB$. This yields $(k+1)^m - k^m$ possibilities in total.
Because $\cA$ is strongly connected, we have $n$ possibilities for the selection of $q$. This yields $n (k+1)^n ((k+1)^m - k^m)$ different $k$-partial transformations in this case.

Altogether, we have $f_\cA(m+k)f_\cB(k) + n(k+1)^n ((k+1)^m - k^m)$.
\qed
\end{proof}

\begin{corollary}
The number of $k$-partial transformations of $\cD_{scti}(\Delta_Q(n_1,\dots,n_m))$ of size $n$ is:
{\footnotesize
\[f_\cD(k) = \left\{ \begin{array}{ll}
			m_{bi}(n,k), & \mbox{if $m=1$};\\
			f_{\cD_{left}}(r+k)f_{\cD_{right}}(k) + \ell(k+1)^{\ell} ((k+1)^{r} - k^{r}), & \mbox{if $m>1$},
			\end{array}
		\right. \]
		}

\noindent where $\cD_{left}$ is the DFA defined by $\Delta_{Q_{left}}(n_1,\dots,n_i)$,
the left subtree of the tree $\Delta_Q(n_1,\dots,n_m)$,
$\cD_{right}$ is defined by $\Delta_{Q_{right}}(n_{i+1},\dots,n_m)$, 
the right subtree of $\Delta_Q(n_1,\dots,n_m)$, and $\ell$, $r$ are the numbers of states in $\cD_{left}$ and $\cD_{right}$, respectively.
\end{corollary}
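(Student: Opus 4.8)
The plan is to establish both cases of the formula by induction on the number $m$ of leaves of the full binary tree $\Delta_Q(n_1,\dots,n_m)$. In the base case $m=1$, the DFA $\cD_{scti}(\Delta_Q(n_1))$ is simply a bipath on $n=n_1$ states together with the identity transformation of Type~4, so its number of consistent $k$-partial transformations is precisely the quantity $m_{bi}(n,k)$ obtained in the proof of Theorem~\ref{thm:unitary_size}. This is exactly the first line of the claimed formula.

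For the inductive step $m>1$, I would first invoke the structural observation made just before Lemma~\ref{lem:scsum_minimal}: the DFA $\cD_{scti}(\Delta_Q(n_1,\dots,n_m))$ is the semiconstant sum $(\cD_{left},\cD_{right})$ of the semiconstant tree DFAs determined by the left and right subtrees of the root, on $\ell$ and $r$ states respectively. Theorem~\ref{thm:composed_k_partial_transformations} would then give exactly the second line, provided its hypothesis is met, namely that $\cD_{left}$ and $\cD_{right}$ are strongly connected.

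To secure that hypothesis I would run a parallel induction showing strong connectedness for every $\cD_{scti}(\Delta_Q(\cdot))$: a bipath is strongly connected, and if $\cA$ and $\cB$ are strongly connected then so is the semiconstant sum $(\cA,\cB)$, because the constant generator $(Q_\cC\to q_\cA)$ sends every state into $Q_\cA$, the transitions inherited from $\cA$ let one reach any state of $Q_\cA$ from $q_\cA$, and the unitary generators $(p\to q)$ with $p\in Q_\cA$, $q\in Q_\cB$ then let one reach any state of $Q_\cB$. Applying Theorem~\ref{thm:composed_k_partial_transformations} with $\cA=\cD_{left}$ (whose state count plays the role of ``$n$'', i.e., $\ell$) and $\cB=\cD_{right}$ (whose state count plays the role of ``$m$'', i.e., $r$) yields
$$f_\cD(k) = f_{\cD_{left}}(r+k)\,f_{\cD_{right}}(k) + \ell(k+1)^{\ell}\left((k+1)^{r}-k^{r}\right),$$
as claimed.

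I do not expect a genuine obstacle here; the argument is a direct combination of results already in hand. The one point deserving care is verifying the strong-connectedness hypothesis of Theorem~\ref{thm:composed_k_partial_transformations}, which is why the auxiliary induction above is folded into the proof. It is also worth noting explicitly that the identity transformation belongs to every $\cD_{scti}$, so that the base case genuinely matches the definition of $m_{bi}(n,k)$, which counts $k$-partial transformations of a bipath equipped with an identity transformation.
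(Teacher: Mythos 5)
Your proof is correct and follows essentially the same route as the paper, which derives the corollary directly from Theorems~\ref{thm:unitary_size} and~\ref{thm:composed_k_partial_transformations} via the decomposition of a semiconstant tree DFA as the semiconstant sum of its left and right subtree DFAs. Your explicit auxiliary induction verifying that every $\cD_{scti}$ is strongly connected (so that the hypothesis of Theorem~\ref{thm:composed_k_partial_transformations} applies) is a detail the paper leaves implicit, and is a worthwhile addition.
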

\begin{proof}
This follows from Theorems~\ref{thm:unitary_size} and~\ref{thm:composed_k_partial_transformations}.
\qed
\end{proof}

The size of the semigroup of DFA $\cD_{scti}(\Delta_Q(n_1,\dots,n_m))$ is $f_\cD(0)$.

\begin{corollary}
Let $m_{scti}(n,k)$ be the maximal number of $k$-partial transformations of a semiconstant DFA $\cD_{scti}(\Delta_Q(n_1,\dots,n_m))$ with $n$ states. Then
\begin{equation}
m_{scti}(n,k) = \max\begin{cases}
m_{bi}(n,k) \\
\max\limits_{s=1,\dots,n-1}\left\{\begin{aligned}
& m_{scti}(n-s,s+k)m_{scti}(s,k) \\
& + (n-s)(k+1)^{n-s} ((k+1)^{s} - k^{s})\end{aligned}
\right\}.
\end{cases}
\end{equation}
\end{corollary}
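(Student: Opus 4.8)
The plan is to read this recursion off directly from the preceding corollary, which already supplies a closed recursive formula for $f_\cD(k)$, by maximizing over all structures of $Q$ on $n$ states. Since for a fixed $n$ there are only finitely many structures $\Delta_Q(n_1,\dots,n_m)$, the maximum defining $m_{scti}(n,k)$ is attained, and I would partition the structures according to whether $m=1$ or $m>1$.

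For a structure with $m=1$ the DFA $\cD_{scti}$ is a single bipath with the identity adjoined, and the preceding corollary gives $f_\cD(k)=m_{bi}(n,k)$; this contributes the first line of the outer $\max$. For $m>1$, every structure is the semiconstant sum of the DFA $\cD_{left}$ determined by the left subtree of the root, on some $\ell\ge 1$ states, with the DFA $\cD_{right}$ determined by the right subtree, on the remaining $r=n-\ell\ge 1$ states; the preceding corollary then gives $f_\cD(k)=f_{\cD_{left}}(r+k)\,f_{\cD_{right}}(k)+\ell(k+1)^{\ell}\bigl((k+1)^{r}-k^{r}\bigr)$.

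The crux is to push the maximization inside this expression. I would observe that (i) the additive term $\ell(k+1)^{\ell}((k+1)^{r}-k^{r})$ depends only on $\ell,r,k$, not on the two subtrees; (ii) the left structure (a structure on $\ell$ states) and the right structure (a structure on $r$ states) may be chosen independently of one another, and $f_{\cD_{left}}(r+k)$ depends only on the left structure while $f_{\cD_{right}}(k)$ depends only on the right structure; and (iii) the map $(x,y)\mapsto xy+c$ is non-decreasing in each of $x\ge 0$ and $y\ge 0$ for fixed $c$, while all of $f_{\cD_{left}}(r+k)$ and $f_{\cD_{right}}(k)$ are positive. Hence maximizing over all $m>1$ structures arising from a fixed split $\ell+r=n$ yields $m_{scti}(\ell,r+k)\,m_{scti}(r,k)+\ell(k+1)^{\ell}((k+1)^{r}-k^{r})$, and then maximizing over the split, i.e. over $r\in\{1,\dots,n-1\}$ (every such value being realized, for instance by taking both subtrees to be single leaves), reproduces the second line of the claimed formula after the substitution $s=r$, $\ell=n-s$. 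Taking the maximum of the $m=1$ and $m>1$ contributions gives the stated equation.

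I do not anticipate a genuine obstacle: the argument is a finite chain of equalities among already-defined quantities, and the only point needing care is the optimal-substructure step of the previous paragraph — verifying that the two subtrees are chosen genuinely independently, so that a maximum of a product becomes a product of maxima, and that the expression is monotone in each factor, which is what licenses replacing $f_{\cD_{left}}$ and $f_{\cD_{right}}$ by $m_{scti}$. The smallest case $n=1$ (only $m=1$ is possible) gives $m_{scti}(1,k)=m_{bi}(1,k)=1+k$, so no separate base case or induction hypothesis is required beyond what the recursion itself provides.
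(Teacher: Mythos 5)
Your proposal is correct and follows essentially the same route as the paper: decompose by whether the tree is a single bipath or a semiconstant sum of a left and a right subtree, observe that the formula for $f_\cD(k)$ depends on the subtrees only through their counts of $k$-partial transformations, and use monotonicity plus the independence of the two subtree choices to replace those counts by their maxima before maximizing over the split $s$. The paper states this optimal-substructure step in one sentence; your version merely makes the product-of-maxima and achievability-of-every-split points explicit.
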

\begin{proof}
A semiconstant tree DFA is either a bipath, or a semiconstant sum of two smaller semiconstant tree DFAs. Since its number of transformations depends only on the numbers of $k$-partial transformations of the smaller ones, we can use the maximal ones, and select the best split for the sum.
\end{proof}

The maximal size of semigroups of the DFAs $\cD_{scti}$ with $n$ states is $m_{scti}(n,0)$.

Instead  of a bipath and the value $m_{bi}(n,k)$ we could use any strongly connected automaton with an aperiodic semigroup. If such a semigroup would have a larger number of $k$-partial transformations  than our semiconstant tree DFAs for some $k$, then we could obtain even larger aperiodic semigroups. 

The corollary  results directly in a dynamic algorithm working in $O(n^3)$ time (assuming constant time for arithmetic operations and computing binomials) for computing $m_{scti}(n,0)$, and the distribution with the full binary tree yielding the maximal semiconstant tree semigroup. 

We computed the maximal semiconstant tree semigroups up to $n=500$. For $n=100$, for example, one of the maximal DFAs is
\begin{eqnarray*}
\cD_{scti} & & (((((((2,2),(2,2)),((2,2),(2,2))),(((2,2),(2,2)),((2,2),3))),\\
& & \ ((((2,2),3),(3,3)),((3,3),(3,3)))),((((3,2),(3,2)),((3,2),(2,2))),\\
& & \ ((2,2),(2,2)))),(((3,3),(3,2)),((2,2),2))),
\end{eqnarray*}
and its syntactic semigroup size exceeds $3.3 \times 10^{160}$.

\section{Quotient Complexity of Reversal and Product}
\label{sec:rev}

 In this section we refer to quotient/state complexity simply as complexity. We settle two open problems about the  complexity of reversal and concatenation of star-free languages.

\subsection{Reversal}
The bound $2^n-1$ is reachable by the reverse of a star-free language~\cite{BrLiu12}, but it was not known if $2^n$ can be reached. We answer this question now.

A~\emph{nondeterministic finite automaton (NFA)} is a quintuple 
$\cN=(Q, \Sig, \delta, I,F)$, where 
$Q$, $\Sig$, and $F$ are as in a DFA,  $\delta\colon Q\times \Sig\to 2^Q$ is the  \emph{transition function}, and
$I\subseteq  Q$ is the   \emph{set of initial states}.
As usual, we extend $\delta$ to functions 
$\delta\colon Q\times \Sig^*\to 2^Q$, and $\delta\colon 2^Q\times \Sig^*\to 2^Q$.
The \emph{language accepted} by an NFA $\cN$ is 
$L(\cN)=\{w\in\Sig^*\mid \delta(I,w)\cap F\neq \emp\}$.

Consider a DFA $\cD=(Q, \Sig, \delta, 0,F)$ accepting a language $L$. 
We extend $\delta$ to a function $\delta\colon 2^Q \to 2^Q$ as usual. 
To construct a DFA for $L^R$, we first take the NFA $\cD^R=(Q, \Sig, \delta^R, F,\{0\})$, where $q\in \delta^R(p,a)$ if and only if $\delta(q,a)=p$ for all $p,q\in Q$, $a\in \Sig$; here $\delta^R$ is a relation.
We then determinize $\cD^R$ by the subset construction to get the DFA $\cD^{RD}=(2^Q, \Sig, \delta^R, F,\{0\})$, where now $\delta^R$ is a function $\delta^R\colon 2^Q\to 2^Q$.
For any $P\subseteq Q$, let $\ol{P}=Q\setminus P$.
\begin{lemma}
\label{lem:rev}
For any $P \subseteq Q$ and $w \in \Sigma^*$, $\delta^R(\ol{P},w) =\ol{\delta^R(P,w)}$.
\end{lemma}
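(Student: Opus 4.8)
The statement concerns the determinized reverse automaton $\cD^{RD}=(2^Q,\Sig,\delta^R,F,\{0\})$, where $\delta^R\colon 2^Q\to 2^Q$ is the subset-construction transition function. The key observation is that on the level of single letters, $\delta^R$ acting on subsets is induced by a \emph{relation} on $Q$ that is in fact the functional inverse of the original $\delta(\cdot,a)$: for a letter $a$, $\delta^R(P,a)=\{q\in Q\mid \delta(q,a)\in P\}$, i.e. $\delta^R(P,a)=(\delta(\cdot,a))^{-1}(P)$, the full preimage of $P$ under the transformation induced by $a$. The plan is to prove the claim first for a single letter and then lift to arbitrary words by induction on $|w|$.

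First I would establish the base case $w=a\in\Sig$. Since $\delta(\cdot,a)$ is a (total) function on $Q$, preimages commute with complementation: for any $P\subseteq Q$, a state $q$ lies in $\delta^R(\ol P,a)$ iff $\delta(q,a)\in\ol P$ iff $\delta(q,a)\notin P$ iff $q\notin\delta^R(P,a)$ iff $q\in\ol{\delta^R(P,a)}$. This gives $\delta^R(\ol P,a)=\ol{\delta^R(P,a)}$. The case $w=\eps$ is trivial since $\delta^R(P,\eps)=P$. Then for the inductive step, write $w=av$ with $a\in\Sig$, $v\in\Sig^*$ (or $w=va$; either decomposition works, but decomposing on the right as $w=ua$ is cleanest because $\delta^R(P,ua)=\delta^R(\delta^R(P,u),a)$). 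Applying the inductive hypothesis to $u$ and then the single-letter case to $a$:
\begin{equation*}
\delta^R(\ol P,ua)=\delta^R\bigl(\delta^R(\ol P,u),a\bigr)=\delta^R\bigl(\ol{\delta^R(P,u)},a\bigr)=\ol{\delta^R\bigl(\delta^R(P,u),a\bigr)}=\ol{\delta^R(P,ua)}.
\end{equation*}
This completes the induction.

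The only subtlety — and it is minor — is being careful that $\delta^R$ is being used here in its two guises: as the NFA relation on $Q$ and as the determinized function on $2^Q$. The identity $\delta^R(P,a)=(\delta(\cdot,a))^{-1}(P)$ must be spelled out from the definition $q\in\delta^R(p,a)\iff\delta(q,a)=p$, extended to sets by $\delta^R(P,a)=\bigcup_{p\in P}\delta^R(p,a)$. Once that translation is in hand, the argument is purely the elementary fact that preimage under a total function is a Boolean-algebra homomorphism, in particular commutes with complement. I do not expect any real obstacle; the proof is a one-line base case plus a routine induction, and the main point worth stating explicitly is why the single-letter step works (totality of $\delta$, hence the preimage of the complement is the complement of the preimage).
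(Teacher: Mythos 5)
Your proof is correct and rests on exactly the same idea as the paper's: $\delta^R(P,w)$ is the preimage of $P$ under the total function $q\mapsto\delta(q,w^R)$, and preimages under a total function commute with complementation. The paper states this whole-word preimage characterization directly and concludes in one line, whereas you derive it by induction on $|w|$ from the single-letter case; this is only a difference in presentation, not in substance.
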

\begin{proof}
By the reversal and subset constructions, for $P,S\subseteq Q$, we have $\delta^R(S,w)=P$  if and only if $\delta(P,w^R)=S$. Thus
$$\ol{\delta^R(P,w)}=\{q\in Q \mid \delta(q,w^R)\not\in P\}=\{q\in Q \mid \delta(q,w^R)\in \ol{P}\}=\delta^R(\ol{P},w). \hspace{.8cm} \qed $$
\end{proof}

\begin{theorem}
\label{thm:rev}
If $L\subseteq \Sig^*$ is star-free, the  complexity of $L^R$ is at most $2^n-1$.
\end{theorem}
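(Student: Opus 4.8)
The plan is to use the standard argument that the reverse of a star-free language is again star-free (which follows from Sch\"utzenberger's theorem, since aperiodicity is preserved under reversal of the minimal DFA), combined with a counting argument on the subset automaton $\cD^{RD}$ that eliminates one subset from the $2^n$ potential states. More precisely, I would start from a minimal DFA $\cD=(Q,\Sig,\delta,0,F)$ for the star-free language $L$, and consider the determinized reverse DFA $\cD^{RD}=(2^Q,\Sig,\delta^R,F,\{0\})$ from the construction preceding the statement. The complexity of $L^R$ is the number of states of $\cD^{RD}$ reachable from the initial state $F$ (or, equivalently after minimization, at most that number). So it suffices to show that not all $2^n$ subsets of $Q$ can be reachable.

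The key step is Lemma~\ref{lem:rev}: for every $P\subseteq Q$ and every $w\in\Sig^*$, $\delta^R(\ol P,w)=\ol{\delta^R(P,w)}$. This says that complementation of subsets commutes with the action of $\cD^{RD}$. I would then argue by contradiction: suppose both the empty set $\emp$ and the full set $Q$ are reachable from $F$ in $\cD^{RD}$. First note that $\emp$ is a sink, since $\delta^R(\emp,a)=\emp$ for every letter $a$ (the empty set maps to the empty set). By Lemma~\ref{lem:rev} applied with $P=\emp$, the singleton orbit $\{\emp\}$ forces $\{Q\}$ to be its complement-image, hence $Q$ is also a sink: $\delta^R(Q,a)=\ol{\delta^R(\emp,a)}=\ol\emp=Q$. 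Now if both $\emp$ and $Q$ are reachable, they are distinct sinks, so $\cD^{RD}$ restricted to the reachable part has (at least) two distinct states from which no new behaviour is possible; but in fact the real point is sharper — I want to show that $\emp$ and $Q$ cannot both be reachable, because reaching $Q$ from $F$ by a word $w$ means $\delta^R(F,w)=Q$, and then $\delta^R(\ol F,w)=\ol Q=\emp$ by the lemma; conversely the reachability of $\emp$ needs more care. The cleanest route: since $\emp$ is always reachable in any determinized reverse DFA whenever $0\notin\delta^R(I,w)$ can fail... let me instead argue that $Q$ is unreachable. If $Q$ were reachable, say $\delta^R(F,w)=Q$ for some $w$, then $\delta(q,w^R)\in F$ for all $q\in Q$, meaning $w^R$ induces a transformation of $\cD$ whose image lies entirely in $F$; this is possible. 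So the genuine obstruction must be combined differently: among the $2^n$ subsets, $\emp$ is always reachable (take any word mapping nothing into the relevant preimage — actually $\emp=\delta^R(S,w)$ whenever $S\cap\timg(\delta(\cdot,w^R))=\emp$), but then its complement $Q$ would have to be reachable too \emph{only if} $\ol F$ plays the role of a reachable start, which it does not. The correct statement to extract from Lemma~\ref{lem:rev} is: the reachable subsets from $F$ and the reachable subsets from $\ol F$ are complements of each other; since $F$ itself is reachable (empty word), $\ol F$ is reachable; hence $\emp$ reachable from $F$ iff $Q$ reachable from $F$. So I would show directly that $\emp$ is reachable from $F$ in $\cD^{RD}$ (because $\cD$ has a dead or trimmable structure, or simply because over $n\ge 1$ states some word has image avoiding... ) and conclude $Q$ is reachable too, then quotient $\cD^{RD}$ by the relation identifying... no. The honest fix: both $\emp$ and $Q$ are reachable, but they are \emph{equivalent} states of $\cD^{RD}$? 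No — $\emp$ is non-final ($0\in\emp$ is false) while $Q$ is final ($0\in Q$), so they are distinguishable.

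Let me settle on the argument that actually works and state it as the plan: I will show $Q$ is \emph{not} reachable from $F$ in $\cD^{RD}$, hence at most $2^n-1$ states are reachable, hence the complexity of $L^R$ is at most $2^n-1$. Suppose for contradiction $\delta^R(F,w)=Q$ for some $w\in\Sig^*$; we may take $w$ of minimal length, and $w\ne\eps$ since $F\ne Q$ (as $\cD$ is minimal with $n\ge 2$; the case $n=1$ is trivial since $2^1-1=1$ and a one-state DFA has a one-state reverse). Write $w=av$ with $a\in\Sig$; then $\delta^R(\delta^R(F,a),v)=Q$. By Lemma~\ref{lem:rev}, $\delta^R(\ol{\delta^R(F,a)},v)=\ol Q=\emp$, and since $\emp$ is a sink, once we hit $\emp$ we stay there; tracking back, the orbit of $\ol F$ under $w$ equals $\ol{\text{orbit of }F}$, so $\delta^R(\ol F,w)=\emp$. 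But $\ol F$ is reachable from $F$: indeed $F=\delta^R(F,\eps)$ and... here is where I need that the transition monoid being aperiodic forces some idempotent to realize $\ol F$ as an image — and this is exactly the subtle point. The main obstacle, and where I would spend the real effort, is precisely establishing that reaching $\emp$ from $F$ is unavoidable (equivalently, that the reachable part of $\cD^{RD}$ is closed under the sink $\emp$ in a way that its complement $Q$ also becomes reachable), or alternatively finding the slicker self-complementary counting argument: the $2^n$ subsets split into complementary pairs $\{P,\ol P\}$, the map $P\mapsto\ol P$ is an automaton automorphism of $\cD^{RD}$ exchanging the "accept behaviour" appropriately, and since $\emp$ (a sink, non-final) and $Q$ (a sink, final) cannot both lie in the reachable part together with all of $2^Q\setminus\{\emp,Q\}$ without contradicting aperiodicity of the transition monoid of the \emph{reverse} — no, the reverse need not be aperiodic in general, but it \emph{is} here because $L^R$ is star-free, by Sch\"utzenberger applied to $L$ star-free (reversal preserves star-freeness via the dual of McNaughton--Papert, or simply: star-free languages are closed under reversal). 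So the transition monoid of the minimal DFA of $L^R$ is aperiodic; that minimal DFA is a quotient of the reachable part of $\cD^{RD}$; if all $2^n$ subsets were reachable and pairwise distinguishable we would need the transition monoid of $\cD^{RD}$ to be aperiodic, but $\cD^{RD}$ contains the two sinks $\emp,Q$ plus the complementation symmetry, and I would derive a non-trivial cycle (a transposition) from a word $w$ with $\delta^R(F,w)=Q$ together with a word sending $Q$-region back, contradicting aperiodicity. I would carry out the steps as: (1) reduce to $n\ge 2$ and invoke star-freeness of $L^R$; (2) recall Lemma~\ref{lem:rev} to get the complementation symmetry; (3) observe $\emp$ and $Q$ are both sinks and are distinguishable; (4) show that assuming $Q$ is reachable from $F$ lets one build a word inducing a transposition on two reachable subsets of $\cD^{RD}$, contradicting that the transition monoid of the minimal DFA of the star-free language $L^R$ is aperiodic; (5) conclude at most $2^n-1$ subsets are reachable, so the complexity of $L^R$ is at most $2^n-1$. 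The hard part will be step (4): extracting an honest non-trivial permutation from the reachability of $Q$, which is where Lemma~\ref{lem:rev} does the work — the pair $(F_0,\ol{F_0})$ for a suitable reachable $F_0$ gets swapped by the word that drives $F$ to $Q$, because that word sends $F$-side to the full set and hence (by the lemma) the $\ol F$-side to $\emp$, and iterating or composing with a return word produces a $2$-cycle on $\{\emp,Q\}$'s preimages.
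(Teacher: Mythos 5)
Your overall strategy --- assume all $2^n$ subsets of $Q$ are reachable in $\cD^{RD}$, use Lemma~\ref{lem:rev} to exhibit a word inducing a transposition, and contradict aperiodicity via closure of star-free languages under reversal --- is exactly the paper's strategy. But the step you defer as ``the hard part'' (your step (4)) is precisely where your concrete construction fails, and the fix is not a technicality. You aim the contradiction at the pair $\{\emp,Q\}$: you correctly note that both are sinks of $\cD^{RD}$ (indeed $\delta^R(\emp,a)=\emp$, and by Lemma~\ref{lem:rev} $\delta^R(Q,a)=\ol{\delta^R(\emp,a)}=Q$), but for that very reason no word can transpose them or anything mapped into them. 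A word $w$ with $\delta^R(F,w)=Q$ satisfies $\delta^R(\ol{F},w)=\emp$; that is a collapse of $\{F,\ol{F}\}$ onto two distinct fixed points, not a permutation, and there is no ``return word'' out of a sink. Two distinguishable sinks are perfectly compatible with aperiodicity, so this route yields no contradiction. Your side claim that ``$\emp$ reachable from $F$ iff $Q$ reachable from $F$'' also does not follow from Lemma~\ref{lem:rev} alone: the lemma relates reachability \emph{from $F$} to reachability \emph{from $\ol{F}$}, so that equivalence already presupposes that $\ol{F}$ is reachable from $F$.

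The missing idea is to aim the transposition at the pair $(F,\ol{F})$ itself. If all $2^n$ subsets are reachable from the initial state $F$, then in particular $\ol{F}$ is, say $\delta^R(F,w)=\ol{F}$. Lemma~\ref{lem:rev} then gives, with no further work, $\delta^R(\ol{F},w)=\ol{\delta^R(F,w)}=\ol{\ol{F}}=F$, so the single word $w$ transposes the two reachable (and, under the assumption that the complexity is $2^n$, inequivalent) states $F$ and $\ol{F}$. This contradicts aperiodicity of the minimal DFA of $L^R$, which must be aperiodic since $L^R$ is star-free. That one-line application of the lemma is the entire content of the paper's proof; your proposal circles it repeatedly but never lands on it, so as written it has a genuine gap.
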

\begin{proof}
Let $\cD$ be the DFA accepting $L$.
In the DFA $\cD^{RD}$, state $\ol{F}$ must be reachable from state $F$, if all $2^n$ states are reachable.
If  $\delta^R(F,w) = \ol{F}$, then by Lemma~\ref{lem:rev},
$\delta^R(\ol{F},w) = F$. 
Hence  $\delta^R(F,ww)=F$ and $\delta^R(\ol{F},ww) = \ol{F}$.
So $ww$ transposes $F$ and $\ol{F}$ in $\cD^{RD}$.
Therefore $\cD^{RD}$ is not aperiodic, and $L^R$ and $L$ are not star-free.
\qed
\end{proof} 

\subsection{Product}

The  complexity of product (catenation, concatenation) of star-free languages has the same tight upper bound as the product of regular languages, with a small exception~\cite{BrLiu12}.
Let $K$ and $L$ be star-free languages with  complexities $m$ and $n$, respectively.
If $m\ge 1$ and $n\ge 3$, the  complexity of $KL$ is $(m-1)2^n+2^{n-1}$, which is the  bound for regular languages.
If $m=1$, then either $K=\emp$ and the complexity of $KL=\emp$ is 1, 
or $K=\Sig^*$. In the latter case, the  complexity of $\Sig^*L$ is $2^{n-1}$, which is again the same as  for regular languages~\cite{BrLiu12}.
If $n=1$, then either $KL=\emp$, or $KL=K\Sig^*$. In the second case the bound is $m$ for both star-free and regular languages~\cite{BJL13,YZS94}.

This leaves the case of $m \ge2 $ and $n=2$.
Let $\cD_K=(Q_K, \Sig, \delta_K, q_0,F_K)$, where $Q_K=\{q_0,q_1,\dots,q_{m-1}\}$, and let 
$\cD_L=(\{0,1\}, \Sig, \delta_L, 0,F_L)$ be the DFAs accepting $K$ and $L$, respectively.
It was shown in~\cite{BrLiu12} that if $F_L=\{0\}$, then $3m-2$ is a tight upper bound on the  complexity of $KL$; otherwise, $3m-1$ is an upper bound.
Here we prove that $3m-2$ is also a tight upper bound if $F_L=\{1\}$, thus closing the gap.

Assume from now on that $F_L=\{1\}$. The state complexity of $KL$ is maximized if $\cD_K$ has only one final state~\cite{YZS94}; without loss of generality we can assume that $F_K=\{q_{m-1}\}$.
To find the DFA accepting $KL$ we first construct an $\eps$-NFA  $\cN$ (an NFA with empty word transitions)  from $\cD_K$ and $\cD_L$ by adding an $\eps$-transition from the final state $q_{m-1}$ of $\cD_K$ to the initial state 0 of $\cD_L$. The set of states of $\cN$ is $Q_K\cup \{0,1\}$, the set of initial states  is $\{q_0\}$, and the set of final states is~$\{1\}$.

Next, we the use the subset construction on $\cN$ to find the DFA 
$\cD_{KL}=(Q,\Sig,\delta,\{q_0\},\{\{1\}\})$ accepting $KL$.
The states of $\cD_{KL}$ are subsets of $Q_K\cup \{0,1\}$.
 The possibly reachable sets of states of $\cN$ are the following: the $(m-1)\cdot 2^2$ sets of the form $\{q_i\}\cup S$, where $i\neq m-1$, $S\in \{\emp,\{1\},\{2\},\{1,2\}\}$, and the two sets $\{q_{m-1},0\}$ and 
$\{q_{m-1},0,1\}$, for a total of at most $4m-2$ states.

\begin{lemma}
\label{lem:prod}
If $F_L=\{1\}$, then for $i=1,\dots, m-1$, $\{q_i,1\}$ is equivalent to  $\{q_i,0,1\}$.
Also,  for any $i,j \neq m-1$ and $i \neq j$, $\{q_i,1\}$ and $\{q_j,1\}$ are equivalent.

\end{lemma}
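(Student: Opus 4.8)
The plan is to analyze the behaviour of the relevant subsets of states of $\cN$ under the action of arbitrary words $w \in \Sig^*$, exploiting the fact that $\cD_L$ has only two states and that $\cN$ is obtained by adding an $\eps$-transition from $q_{m-1}$ to $0$. The key observation is that once a set of states of $\cN$ contains the final state $1$ of $\cD_L$, the membership of $0$ in that set becomes irrelevant for the purpose of determining acceptance, because whether we reach $1$ in the future depends only on the $Q_K$-component (through $q_{m-1}$ and the $\eps$-transition) and on whether $1$ is currently present.

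First I would prove the first equivalence. Fix $i \in \{1,\dots,m-1\}$ and consider a word $w$. I would show by a case analysis on $\delta_K(q_i,w)$ that the final-state-membership of $\delta(\{q_i,1\},w)$ and of $\delta(\{q_i,0,1\},w)$ agree. The point is that $\delta(\{q_i,1\},w)$ and $\delta(\{q_i,0,1\},w)$ differ, if at all, only in whether they contain the state $0$ of $\cD_L$; but the only way a subset can acquire the final state $1$ is either by already containing $1$ and having a letter fix $1$ (since $\cD_L$'s transition semigroup is aperiodic, no letter can map $1$ to $0$ and back in a cycle — more precisely, acceptance at step $w$ is governed by whether $\delta_K(q_i,w') = q_{m-1}$ for some prefix followed by the right suffix behaviour in $\cD_L$, plus the inherited $1$), or by $q_{m-1}$ being reached in the $Q_K$-component and feeding $0$ into $\cD_L$ via the $\eps$-transition. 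In the latter situation both sets behave identically; in the former, the inherited $1$ is present in both. Hence the presence or absence of $0$ alone never changes acceptance, so the two states are equivalent.

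Next I would prove the second equivalence: for $i,j \neq m-1$, $i \neq j$, the states $\{q_i,1\}$ and $\{q_j,1\}$ are equivalent. Here the idea is that $\cD_L$ has no non-final non-initial states other than $0$, and the final state $1$ of $\cD_L$ is a sink once we ignore $0$ — more precisely, since $L$ is star-free with quotient complexity $2$ and $F_L = \{1\}$, state $1$ must be a sink in $\cD_L$ (if $1$ had a transition to $0$, together with a transition from $0$ back to $1$ we would get a $2$-cycle, contradicting aperiodicity; and if $0$ is not distinguishable from $1$ the automaton is not minimal). Therefore, starting from $\{q_i,1\}$, the state $1$ is retained under every letter, so $\{q_i,1\}$ is accepting and $\delta(\{q_i,1\},w)$ is accepting for every $w$; the same holds for $\{q_j,1\}$. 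Thus both states accept $\Sig^*$ and are equivalent (indeed, equivalent to $\{q_i,0,1\}$ as well).

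The main obstacle is establishing cleanly that state $1$ of $\cD_L$ is a sink and that $0 \notin \delta(\{q_i\},w)$-type sets never matter once $1$ is present — i.e. translating the aperiodicity of $L$ (equivalently of $\cD_L$) into the structural statement that $\cD_L = (\{0,1\},\Sig,\delta_L,0,\{1\})$ must have $\delta_L(1,a) = 1$ for every $a \in \Sig$. I would handle this by invoking Sch\"utzenberger's theorem: if some letter $a$ had $\delta_L(1,a) = 0$, then either $\delta_L(0,a) = 1$, giving a transposition $(0,1)$ and contradicting aperiodicity, or $\delta_L(0,a) = 0$, in which case $0$ and $1$ are not distinguishable under any word (both go to $0$ and stay, never reaching $F_L$ again unless already there — but then $L_0 = L_1$), contradicting minimality of $\cD_L$. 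With $1$ a sink, both equivalences follow immediately as sketched above.
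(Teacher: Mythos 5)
Your argument for the first equivalence is essentially the paper's: since $\cD_L$ is a two-state aperiodic DFA, every letter acts on $\{0,1\}$ as the identity or as one of the two constant maps $(\{0,1\}\to 0)$, $(\{0,1\}\to 1)$, so the two subsets $\{q_i,1\}$ and $\{q_i,0,1\}$, which differ only in state $0$, can never come to differ in state $1$; acceptance is therefore identical. That part is fine.

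The second equivalence, however, rests on a false structural claim. You assert that aperiodicity plus minimality force state $1$ to be a sink in $\cD_L$, arguing that $\delta_L(1,a)=0$ together with $\delta_L(0,a)=0$ would make $0$ and $1$ indistinguishable. This is wrong: with $F_L=\{1\}$ the states $0$ and $1$ are distinguished by the \emph{empty} word, and other letters may still map $0$ to $1$. A concrete counterexample is $L=\Sig^*b$ over $\Sig=\{a,b\}$, where $a$ induces the constant $(\{0,1\}\to 0)$ and $b$ induces $(\{0,1\}\to 1)$: this $\cD_L$ is minimal and aperiodic, yet $\delta_L(1,a)=0$, so $1$ is not a sink. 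Consequently your conclusion that $\{q_i,1\}$ accepts all of $\Sig^*$ is false --- reading $a$ from $\{q_i,1\}$ with $\delta_K(q_i,a)\neq q_{m-1}$ yields a non-accepting subset. The paper's proof of this part is genuinely different and cannot be bypassed: it assumes a distinguishing word $w$ with $1\in\delta(\{q_i,1\},w)$ and $1\notin\delta(\{q_j,1\},w)$, deduces $\delta_L(1,w)=0$, factors $w=uv$ so that $0\in\delta(\{q_i\},u)$ and $\delta_L(0,v)=1$, and then derives a contradiction in both cases for $\delta_L(1,u)$ --- either $\delta_L(1,w)=1$ after all, or $v$ induces the transposition $(0,1)$, violating aperiodicity. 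You would need to supply an argument of this kind; the sink claim does not hold.
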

\begin{proof}
As noted in~\cite{BrLiu12}, 
$\{q_i,1\}$ and $\{q_i,0,1\}$ are equivalent for all $i\neq m-1$, because only three types of transitions are possible in $\cD_L$: (a) the identity transition, (b) the constant transition $(\{0,1\}\to 0)$, and (c) the constant transition $(\{0,1\}\to 1)$.
Thus we can keep the representative  $\{q_i,1\}$ and remove $\{q_i,0,1\}$.
Hence there are at most $4m-2-(m-1)=3m-1$ states.

For the second claim, assume to the contrary that $\{q_i,1\}$ and $\{q_j,1\}$ are distinguishable in $\cD_{KL}$. 
Then there exists some $w \in \Sigma^*$ such that (without loss of generality) $\delta(\{q_i,
1\},w)$ contains $1$ ($w$ is accepted from $\{q_i,1\}$) and $\delta(\{q_j,1\},w)$ does not contain $1$ ($w$ is rejected from $\{q_j,1\}$).

Since $\delta(\{q_j,1\},w)$ does not contain $1$, we have $\delta_L(1,w)=0$. Since $\delta(\{q_i,1\},w)$ contains $1$, also $\delta(\{q_i\},w)$ must contain $1$.
Since the only way to reach $1$ from $q_i$ is through $0$, there are some words $u$ and $v$ such that $uv = w$, $\delta(\{q_i\},u)$ contains~$0$ and $\delta_L(0,v)=1$. We have two cases depending on $\delta_L(1,u)$:
\be
\item
If $\delta_L(1,u)=0$, then $\delta_L(1,uv)=1$, contradicting  that $\delta_L(1,w)=0$.
\item
If $\delta_L(1,u)=1$, then $\delta_L(1,v)=0$ (because $\delta_L(1,w)=0$). But we know that $\delta_L(0,v)=1$; so $v$ induces the  cycle $(0,1)$ and $L$ is not aperiodic.
\ee
Thus all $m-1$ states of the form $\{q_i,1\}$ can be represented by one state, say $\{q_0,1\}$, reducing the possible number of indistinguishable states of $\cD_{KL}$ to $3m-1-(m-2)=2m+1$.
Therefore in the case where $F_L=\{1\}$, the state complexity of $KL$ is at most $2m+1$, which is smaller than $3m-2$. 
\qed
\end{proof}
As a consequence of the lemma, the result of~\cite{BrLiu12} can be strengthened as follows:
\begin{theorem}
\label{thm:product}
Let $K$ and $L$ be star-free languages with complexities $m\ge 2$ and 2, respectively;
then the complexity of $KL$ is $3m-2$. 
\end{theorem}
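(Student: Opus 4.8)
The plan is to split the argument according to the unique final state of the minimal DFA $\cD_L=(\{0,1\},\Sig,\delta_L,0,F_L)$ of $L$. Since $L$ is star-free with quotient complexity exactly $2$, both states of $\cD_L$ are reachable and distinguishable, and $F_L$ is a proper non-empty subset of $\{0,1\}$; hence either $F_L=\{0\}$ or $F_L=\{1\}$. I would record this dichotomy first and, in both cases, keep the normalisations already set up above: $F_K=\{q_{m-1}\}$ (legitimate since the product complexity is maximised when $\cD_K$ has a single final state~\cite{YZS94}), and the $\eps$-NFA together with its subset determinisation $\cD_{KL}$, whose reachable states number at most $4m-2$.

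In the case $F_L=\{0\}$ there is nothing new to do: it was shown in~\cite{BrLiu12} that $3m-2$ is a tight upper bound on the quotient complexity of $KL$ here. This simultaneously gives the upper bound in this case and a witness pair $(K,L)$ realising $3m-2$, hence the lower bound for the theorem as a whole.

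In the case $F_L=\{1\}$ I would invoke Lemma~\ref{lem:prod}. Its first assertion (already observed in~\cite{BrLiu12}) identifies $\{q_i,0,1\}$ with $\{q_i,1\}$ for each $i\neq m-1$, removing $m-1$ states and leaving at most $3m-1$; its second assertion identifies all $m-1$ states $\{q_i,1\}$, $i\neq m-1$, with a single one, removing a further $m-2$ and leaving at most $2m+1$, which (as noted in the proof of Lemma~\ref{lem:prod}) does not exceed $3m-2$. Taking the maximum over the two cases shows that the quotient complexity of $KL$ is at most $3m-2$ for every admissible pair, with equality attained in the $F_L=\{0\}$ case, which is exactly the assertion of the theorem.

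The whole weight of the proof rests on the second claim of Lemma~\ref{lem:prod}, and that is the step I expect to be the main obstacle — it is, accordingly, the part proved in detail just above. The key point is that a word $w$ distinguishing $\{q_i,1\}$ from $\{q_j,1\}$ would have to drive $q_i$ through state $0$ of $\cD_L$ and then to $1$ while at the same time forcing $\delta_L(1,w)=0$; factoring $w=uv$ at the relevant visit to $0$ along the $q_i$-run and splitting on whether $\delta_L(1,u)=0$ or $\delta_L(1,u)=1$ forces $v$ to act on $\{0,1\}$ as the transposition $(0,1)$, contradicting aperiodicity of the transition semigroup of $\cD_L$ by Sch\"utzenberger's theorem. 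The only care needed when writing this out is to take the factorisation at the correct occurrence of $0$ and to check that the two subcases are exhaustive, which holds because $\cD_L$ is deterministic on only two states.
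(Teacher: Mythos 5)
Your proposal follows the paper's proof essentially verbatim: the $F_L=\{0\}$ case is delegated to the tight bound of~\cite{BrLiu12} (which also supplies the lower-bound witness), and the $F_L=\{1\}$ case is handled exactly as in the paper via Lemma~\ref{lem:prod}, whose second claim you prove by the same $w=uv$ factorisation and the same two-subcase aperiodicity argument. The only caveat --- one the paper itself shares --- is that the inequality $2m+1\le 3m-2$ used to close the $F_L=\{1\}$ case holds only for $m\ge 3$, so $m=2$ strictly requires a separate (easy) check.
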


\section{Conclusions}
\label{sec:conc}

We have found two new types of aperiodic semigroups. Maximal semiconstant semigroups of type $\cD_{scti}(\Delta_Q(n_1,\dots,n_m))$ are currently the largest aperiodic semigroups known. A~tight upper bound on the size of aperiodic semigroups remains unknown.

\providecommand{\noopsort}[1]{}

\end{document}